\documentclass[sigconf]{acmart}
\usepackage{xcolor}
\usepackage{tabularx} 
\usepackage{multirow} 

\definecolor{gray1}{gray}{0.7}
\definecolor{gray2}{gray}{0.5}
\definecolor{gray3}{gray}{0.3}
\definecolor{textgray}{gray}{0.2}
\newcommand{\gain}[1]{\colorbox{green}{\textcolor{black}{#1}}}
\newcommand{\drop}[1]{\colorbox{pink}{\textcolor{black}{#1}}}

\AtBeginDocument{%
  }

\setcopyright{acmlicensed}
\copyrightyear{2018}
\acmYear{2018}
\acmDOI{XXXXXXX.XXXXXXX}
\acmConference[Conference acronym 'XX]{Make sure to enter the correct
  conference title from your rights confirmation email}{June 03--05,
  2018}{Woodstock, NY}
\acmISBN{978-1-4503-XXXX-X/2018/06}

\begin{document}

\title{Birds of a Feather Cluster Nearby: a Proximity‑Aware Geo-Codebook for Local Service Recommendation}


\author{Tian He}
\authornote{Both authors contributed equally to this research. This work was completed by the two authors during their internship at Meituan.}
\email{hetian@bupt.edu.cn}
\affiliation{%
  \institution{Beijing University of Posts and Telecommunications; Meituan}
  \city{Beijing}
  \country{China}
}

\author{Chen Yang}
\authornotemark[1]
\email{yangchen25@bit.edu.cn}
\affiliation{%
  \institution{Beijing Institute of Technology; Meituan}
   \city{Beijing}
  \country{China}}

\author{Jiawei Zhang}
\email{zhangjiawei43@meituan.com}
\affiliation{%
  \institution{Meituan}
  \city{Beijing}
  \country{China}}

\author{Lin Guo}
\email{guolin08@meituan.com}
\affiliation{%
  \institution{Meituan}
  \city{Beijing}
  \country{China}}

\author{Wei Lin}
\email{linwei31@meituan.com}
\affiliation{%
  \institution{Meituan}
  \city{Beijing}
  \country{China}}

\author{Zhuqing Jiang}
\authornote{Corresponding Author. provided 
academic guidance on the research content. }
\email{jiangzhuqing@bupt.edu.cn}
\affiliation{%
  \institution{Beijing University of Posts and Telecommunications}
  \city{Beijing}
  \country{China}
}

\renewcommand{\shortauthors}{Trovato et al.}

\begin{abstract}
Generative recommendation systems are increasingly adopted in local service platforms, where semantic relevance alone is insufficient without strict geographic feasibility. A key technical challenge lies in semantic ID (SID) tokenization, which directly impacts the recommendation performance. However, existing semantic codebooks neglect geographic constraints, often resulting in recommendations that are semantically relevant yet geographically unreachable. To address this limitation, we propose  \textbf{Pro-GEO}, a \textit{\textbf{Pro}ximity-aware \textbf{GEO}-codebook}. Pro-GEO establishes a geo-centroid local coordinate system to capture intra-cluster spatial relationships and a geo-rotary position encoding mechanism that models geographic proximity as orthogonal rotational transformations in the high-dimensional embedding. This design enables semantic and spatial signals to be jointly modeled in a balanced manner, without reducing geographic information to a weak auxiliary feature. Extensive experiments conducted on a large-scale industrial dataset reveal that Pro-GEO significantly outperforms state-of-the-art methods. In particular, Pro-GEO reduces the average geographic clustering distance by 45.60\% and achieves a 1.87\% improvement in Hit@50, highlighting its effectiveness for real-world local service recommendation.

\end{abstract}




\keywords{Generative recommendation system, Local service recommendation, Geo-centroid local coordination, Geo-rotary position encoding}


\maketitle
\begin{figure}[h]
    \centering
    \includegraphics[width=\linewidth]{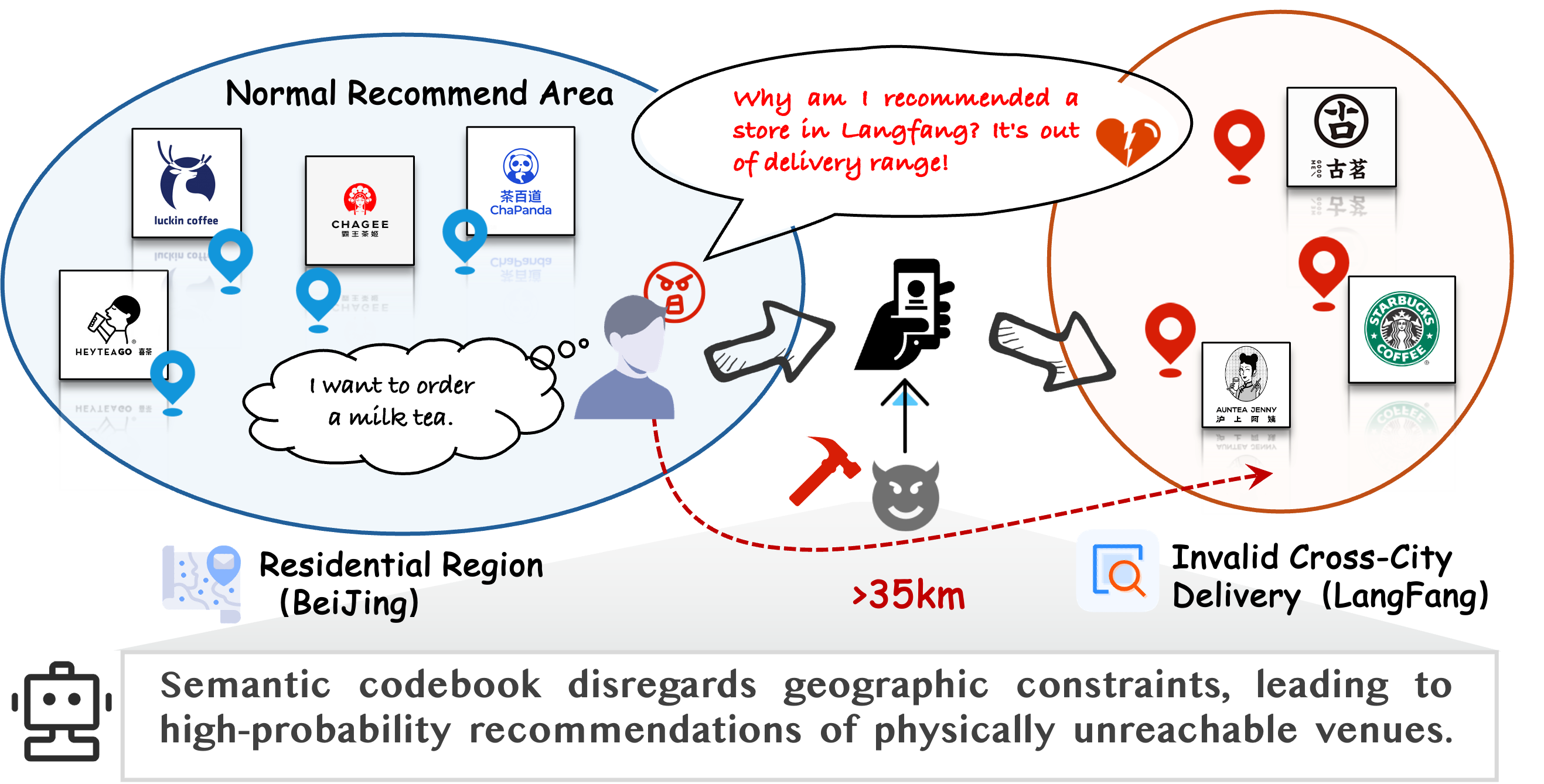}
    \vspace{0pt}
    \caption{Illustration of local lifestyle recommendation.}  
    \label{fig:1}
    \vspace{0pt}
\end{figure}
\section{Introduction}

Local service platforms such as Meituan, DoorDash and Yelp have become integral to daily life \cite{jiang2025plug,wang2025fim,hu2025dynamic}, where user satisfaction hinges not only on semantic relevance and personalization, but also on strict geographic feasibility. Recommending a semantically relevant yet geographically inaccessible Point-of-Interest (POI) can significantly degrade user experience and lead to inefficient resource allocation. Therefore, several local recommendation methods \cite{lian2014geomf,lim2020stp} explicitly incorporate spatial and mobility signals into geo-aware multi-stage ranking pipelines, such as user location, POI coordinates, and spatio-temporal patterns.

Recently, industrial recommender systems have increasingly shifted from multi-stage pipelines to end-to-end generative recommendation frameworks \cite{chen2025unisearch,hong2025eager,hou2025generating}. Instead of retrieving and ranking items from a predefined candidate set, these models directly generate item sequences using tokenized Semantic IDs (SIDs). To address the scalability challenges of large item vocabularies, SIDs encode each POI as a fixed-length code drawn from a shared codebook, effectively mitigating vocabulary explosion. Representative approaches such as TIGER \cite{rajput2023recommender} employed RQ-VAE to quantize residuals between latent embeddings and codebook centers, enabling coarse-to-fine semantic encoding. This paradigm was widely adopted by subsequent models, including LC-Rec \cite{zheng2024adapting}, COBRA \cite{yang2025sparse}, GFlowGR \cite{wang2025gflowgr}, and STREAM-Rec \cite{zhang2025slow}. More recently, RQ-KMeans \cite{han2026feature,deng2025onerec} was proposed to construct denser semantic codebooks. Despite their success, existing SID-based generative recommender systems primarily cluster POIs using high-variance semantic signals, such as categories or user interaction patterns, with little explicit consideration of geographic feasibility. As illustrated in Fig.~\ref{fig:1}, a hamburger restaurant in Langfang, Hebei Province, may be recommended to a user searching in Tongzhou District, Beijing. Although these locations are geographically adjacent, they are typically separated by more than 35 kilometers and fall under different municipal jurisdictions, rendering such recommendations infeasible for real-world local services. To mitigate this issue, recent studies have attempted to incorporate geographic information into codebook construction. GNPR-SID \cite{wang2025generative} attempted Google Maps Plus Codes to discretize latitude–longitude coordinates into regional identifiers, enabling information sharing among spatially proximate POIs. OneLoc \cite{wei2025oneloc} further integrated geographic context as a core feature in residual quantization to produce a geo-aware semantic codebook. While these methods improve spatial coherence to some extent, they rely on global geographic features and treat location primarily as an auxiliary attribute. Whether such global representations are sufficient for accurately modeling local spatial relationships in generative recommendation remains an open question.

Our large-scale analysis on a leading local-life service platform in China, following the method described in \cite{wang2025generative}, reveals that over 50\% of POIs sharing the same semantic ID are geographically separated by more than 40 kilometers. This indicates a fundamental structural bias in semantic codebook quantization: high-variance semantic features dominate the clustering objective, relegating geographic signals to weak regularizers that fail to meaningfully constrain spatial feasibility. To address this issue, we propose Pro-GEO, a proximity-aware GEO-codebook designed to explicitly encode relative geographic structure within semantic tokenization. Our work is guided by two key research questions:

\textit{\textbf{Q1: Are global geographic features (e.g., latitude–longitude or geohash) sufficient for local service recommendation}, or should spatial information be modeled in a relative, behavior-aware coordinate system?}

\textit{\textbf{Q2: How can semantic relevance and geographic proximity be jointly encoded in the codebook}, such that neither collapses into a weak regularizer of the other?}

For \textbf{Q1}, while latitude–longitude coordinates offer universal location, their direct application in representation learning fails to preserve meaningful local spatial relationships, especially in large-scale datasets where small coordinate differences can correspond to substantial physical distances. Motivated by this observation, we propose a \textbf{\textit{geo-centroid local coordinate system}} that redefines how spatial information is represented for POI recommendation. In the early stages of clustering, high-dimensional semantic vectors dominate, resulting in SID subsets that are semantically coherent. Within each such subset, we define a geographic centroid by aggregating the actual latitude and longitude of all POIs, treating this central point as the anchor for local spatial reasoning. We reimagine the spatial landscape of each cluster by transforming every POI’s global position into polar coordinates. This transformation prioritizes local proximity over global precision, preserving fine-grained spatial relationships that are crucial for feasible recommendations.

To operationalize local geographic coordinates in the quantized token space and address \textbf{Q2}, we design a \textit{\textbf{geo-rotary position encoding}} (GEO-RoPE) inspired by the RoPE mechanism \cite{su2024roformer}. GEO-RoPE encodes each POI’s spatial offset via complex-valued rotations applied directly to the semantic codeword embedding. This design encodes geographic proximity as geometry-aware transformations in the representation space, ensuring that spatially nearby POIs share similar embeddings while preserving semantic integrity through orthogonal rotations. By integrating (i) standard semantic codewords and (ii) a geo-codeword that captures local spatial displacement via geometric rotation, Pro-GEO balances semantic dominance and geographic sensitivity within a unified codebook framework. Our contributions are summarized as follows:

1) We propose a geo-centroid local coordinate system that anchors each semantic cluster to a geographic centroid, explicitly preserving fine-grained intra-cluster spatial relationships.

2) We design Geo-RoPE, a geometry-aware encoding mechanism that embeds relative geographic proximity directly into the token space while maintaining semantic structure.

3) We conduct extensive experiments on a large-scale industrial dataset, demonstrating that Pro-GEO reduces average geographic clustering distance by 45.60\% without codebook collapse and improves Hit@50 by 1.87\%.

\section{Related works}
\begin{figure*}[h]
    \centering
    \includegraphics[width=0.95\linewidth]{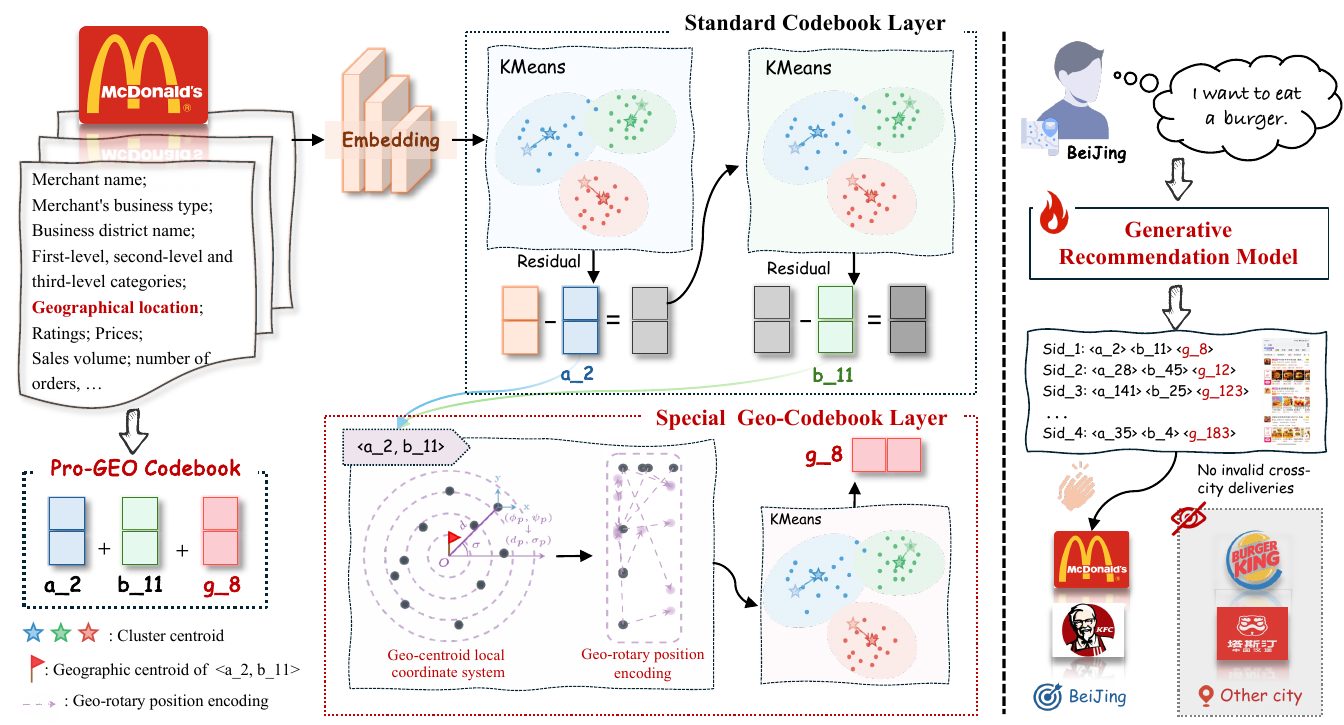}
    \vspace{-0pt}
    \caption{The overview of the ProGEO. It includes two standard codebook layers and a geo-codebook layer. The standard codebook layers use RQ-Kmeans to cluster mixed feature vectors, while the geo-codebook layer enhances spatial features from the residuals of semantic IDs. Each POI’s global coordinates $(\phi_p, \psi_p)$ are converted into geo-center local coordinates $(d_p,\sigma_p)$, with Geo-RoPE applied for geographic modulation. Based on user intent, a fine-tuned LLM generates recommendation candidates, ensuring the exclusion of invalid cross-city deliveries.}
    \vspace{0pt}
    \label{fig:2}
\end{figure*}
\subsection{Item Tokenization}
The tokenization has become the predominant approach in generative recommendation, aiming to compress rich item semantics into fixed-length discrete symbol sequences \cite{li2025survey}. 

The transition from VQ-VAE \cite{van2017neural} to its variant RQ-VAE marked a decisive milestone, enabling SID tokenization to overcome the expressiveness limitation of single-codebook quantization and become integral to modern generative recommender systems. Representative frameworks such as TIGER \cite{rajput2023recommender} and LC-Rec \cite{zheng2024adapting} typically employed hierarchical and residual quantization, followed by sequence modeling over semantic IDs. OneRec \cite{deng2025onerec} adopted a deterministic RQ-Kmeans clustering framework, whose parameter-free and fixed-step nature yields more stable codebook allocation. Zhou et al. \cite{zhou2025hymirec} utilized cosine similarity and directional projection for residual quantization to better preserve semantic angles. PSRQ \cite{wang2025progressive} enhanced semantic preservation by concatenating residual vectors with the original content features at each quantization layer. OneSearch \cite{chen2025onesearch} targeted the core requirement of strong query-item relevance by employing hierarchical quantization techniques like RQ-OPQ to reinforce the preservation of key item attributes. As SID tokenization is extended to more complex and open-world scenarios, adaptation to domain-specific signals becomes critical, such as local life services. Recent works such as OneLoc \cite{wei2025oneloc} and GNPR-SID \cite{wang2025generative} directly embed geographic coordinate information directly into item representations, thereby generating intrinsically location-aware semantic IDs. LGSID \cite{jiang2025llm} introduced an LLM-aligned geographic item tokenization framework that leverages reinforcement learning-based alignment and hierarchical geo-aware quantization to endow semantic IDs with real-world spatial awareness.


\subsection{Generative recommendation}

 Initial research \cite{rajput2023recommender, geng2022recommendation, cui2022m6} explores the direct application of powerful conversational large language models (LLMs) to recommendation systems. Subsequent research focuses on the fine-grained modeling of user interests and multi-objective optimization. GPR \cite{zhang2025gpr} proposed a heterogeneous hierarchical decoder and multi-stage joint training strategies to align user intent modeling with advertising generation. OneRec \cite{deng2025onerec} leveraged session-wise generation and Mixture-of-Experts architectures to produce coherent and diverse video recommendations, while OneRec-V2 \cite{zhou2025onerec} streamlined computational efficiency with a lazy decoder-only architecture and improved preference alignment via direct user feedback. OneRec-Think \cite{liu2025onerec} extends generative recommendation by explicitly integrating dialogue and reasoning capabilities, enabling emotionally adaptive recommendations through reasoning-enhanced reward functions. UniSearch \cite{chen2025unisearch} reimagined search systems with an end-to-end generative framework that unifies semantic identifier generation and multimodal content encoding.

\section{Methodology}

Our Pro-GEO is based on the consensus that geographical proximity critically influences user experience, The framework is shown in Fig.\ref{fig:2}. Let $P = \{p_{1}, p_{2}, ..., p_{N}\}$ represent the set of $N$ POIs, respectively. Each POI is characterized by a tuple $p_N = (s, \phi, \psi)$, where $s$ contains rich semantic information such as category, brand, rating, and price, and $(\phi, \psi)$ denote the geographical coordinates. By mapping POI features into unique identifiers $sid$, the generative recommendation task aims to generate personalized POI suggestions by modeling the conditional probability distribution over the POI set.


\subsection{Semantic Codebook Clustering}
Given POI representations $\mathcal{X} = \{\mathbf{x}_i \in \mathbb{R}^M\}_{i=1}^N$, the conventional RQ-Kmeans algorithm relies on Euclidean distance for similarity measurement, which is ill-suited for high-dimensional embedding spaces. Euclidean distance becomes less discriminative due to the curse of dimensionality, and may not align well with the similarity objectives used in industrial recommender systems.

Inspired by HymiRec \cite{zhou2025hymirec}, we adopt cosine similarity as the clustering metric. At layer $l$, for each residual embedding $\mathbf{r}_i^{\langle l-1 \rangle }$, the cluster assignment is determined: 
\begin{equation}
    j_i^{\langle l \rangle } = \arg\max_{j} \frac{\mathbf{r}_i^{\langle l-1 \rangle }{}^\top \mathbf{c}_j^{\langle l \rangle }}{\|\mathbf{r}_i^{\langle l-1 \rangle }\|_2 \|\mathbf{c}_j^{\langle l \rangle }\|_2},
\end{equation}
where $\mathbf{r}_i^{\langle 0 \rangle } = \mathbf{x}_i$. Cosine similarity focuses on the angular relationship between vectors, offering greater robustness to scale variations. 

Furthermore, we compute the projection residuals to further eliminate information redundancy. For each data point $i$ and layer $l$, we project the residual embedding $\mathbf{r}_i^{\langle l-1 \rangle }$ onto the direction of the assigned centroid $\mathbf{c}_{j_i^{\langle l \rangle }}^{\langle l \rangle }$ and remove this component:
\begin{equation}
    \mathbf{r}_i^{\langle l \rangle } = \mathbf{r}_i^{\langle l-1 \rangle } - 
    \frac{\mathbf{r}_i^{\langle l-1 \rangle \top} \mathbf{c}_{j_i^{\langle l \rangle }}^{\langle l \rangle }}{\|\mathbf{c}_{j_i^{\langle l \rangle }}^{\langle l \rangle }\|_2^2} \mathbf{c}_{j_i^{\langle l \rangle }}^{\langle l \rangle }.
\end{equation}

This operation projects the residual onto the subspace orthogonal to the selected centroid direction, thereby filtering out redundant information while retaining the magnitude of the embedding along the cluster direction. By iteratively applying this process over two layers, we obtain semantic identifiers $(j_i^{\langle 1 \rangle }, j_i^{\langle 2 \rangle })$, which form a hierarchical semantic codebook.

\subsection{Geo-Centroid Local Coordinate Mapping}
Global geographic coordinates present two challenges in local spatial modeling: (1) Minor coordinate variations may correspond to vast distances in the real world; (2) Lack of reference standards supporting relative positioning within regions. To overcome these limitations, we propose a geo-centroid local coordinate system, as shown in Fig.\ref{fig:3}.

\begin{figure}[h]
    \centering
    \includegraphics[width=1\linewidth]{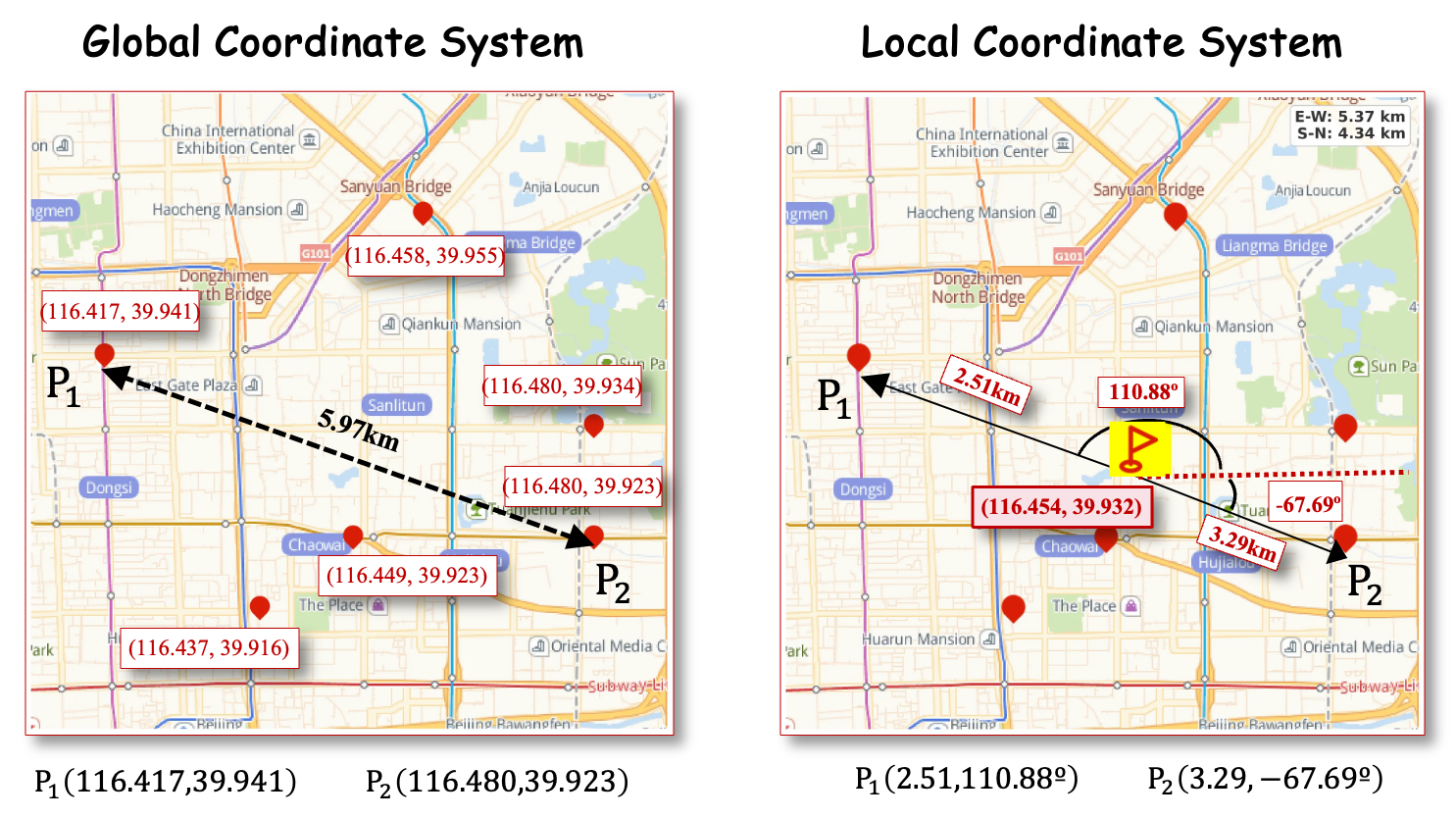}
    \vspace{0pt}
    \caption{Comparison between global and local Coordinate Systems. The global system expresses locations in absolute coordinates, which may obscure local spatial relationships. In contrast, the Geo-Centroid local system anchors positions relative to a centroid, enabling more accurate and interpretable modeling of local spatial patterns.}
    \vspace{0pt}
    \label{fig:3}
\end{figure}

Let $\mathcal{S}_{(j^{\langle 1 \rangle },j^{\langle 2 \rangle })}$ be the set of POIs assigned to the two-layer SID, Each POI $p$ is associated with global geographic coordinates $(\phi_p, \psi_p)$, which represent latitude and longitude. We adopt the geographic centroid $O_{p}^{j}$ of $\mathcal{S}_{(j^{\langle 1 \rangle },j^{\langle 2 \rangle })}$ as the origin of its local coordinate system.

\begin{equation}
O_{p}^{j} = (\phi_0, \psi_0) = \left( \frac{1}{|\mathcal{S}|} \sum_{s \in \mathcal{S}} \phi_s, \frac{1}{|\mathcal{S}|} \sum_{s \in \mathcal{S}} \psi_s \right).
\end{equation}

Each POI within the cluster is then represented by its relative polar coordinates with respect to the centroid:

\begin{equation}
\begin{aligned}
d_p &= 2R \arcsin\left(
    \sqrt{
        \sin^2\left(\frac{\phi_p - \phi_0}{2}\right)
        + \cos\phi_0 \cos\phi_p \sin^2\left(\frac{\psi_p - \psi_0}{2}\right)
    }
\right), \\
\sigma_p &= \arg\left(x + i y\right),
\end{aligned}
\label{eq:polar}
\end{equation}
where $R$ is the mean radius of the Earth, typically set to $6371$ kilometers, $\arg(\cdot)$ denotes the principal value of the argument, and 
\begin{equation}
    \begin{aligned}
x &= \cos\phi_0 \cdot \sin\phi_p - \sin\phi_0 \cdot \cos\phi_p \cdot \cos(\psi_p - \psi_0), \\
y &= \sin(\psi_p - \psi_0) \cdot \cos\phi_p.
\end{aligned}
\end{equation}

This representation provides two key advantages: (1) the radial distance $d_p$ encodes accessibility relative to the cluster center, and (2) the azimuth angle $\sigma_p$ captures directional distribution within the localized region.

\subsection{Geo-Rotary Position Encoding}

RoPE \cite{su2024roformer} is a self-attention position encoding mechanism originally designed for Transformers, which models the relative position between sequence elements via orthogonal rotations. We extend RoPE to the spatial domain, formalizing geographic proximity as rotary operations in high-dimensional feature space. 

Given a POI $p$, we denote its second-layer residual vector as $\mathbf{r}_p^{\langle 2 \rangle} \in \mathbb{R}^{M}$, and its geo-centroid local coordinates as $(d_p, \sigma_p)$. Then, define the block-diagonal rotation matrix as

\begin{equation}
    \mathcal{R}(\theta_p) = \text{diag}\underbrace{(R(\theta_p), R(\theta_p), \ldots, R(\theta_p))}_{m \text{ times}} \in \mathbb{R}^{M \times M},
\end{equation}
where $m=M/2$, and
\begin{equation}
    R(\theta_p) = \begin{bmatrix} \cos\theta_p & -\sin\theta_p \\ \sin\theta_p & \cos\theta_p \end{bmatrix}.
\end{equation}

Unlike the original RoPE, the frequency parameter $\theta_p$ determines the wavelength distribution and controls the receptive field over sequential positions, our codebook does not model contextual relationships between tokens within a residual vector. Instead, it focuses on modeling relative spatial relationships between different POI residual vectors, $\theta_p \in \{d_p, \sigma_p\}$. To further reinforce spatial sensitivity, we propose the forward and reverse rotation strategy.

Taking $\theta_p = \sigma_p$ as an example, the Geo-RoPE transformation $\mathcal{T}_{\theta_{p}}: \mathbb{R}^{M} \rightarrow \mathbb{R}^{2M}$ is defined as
\begin{equation}
   \mathbf{r}_{p}^{\langle 2 \rangle\prime} =  \mathcal{T}_{\sigma_p}(\mathbf{r}_p^{\langle 2 \rangle}) = \begin{bmatrix} \mathcal{R}(\sigma_p) \mathbf{r}_p^{\langle 2 \rangle} \\ \mathcal{R}(-\sigma_p) \mathbf{r}_p^{\langle 2 \rangle} \end{bmatrix} \in \mathbb{R}^{2M}.
\end{equation}

For any two vectors $\mathbf{r}_{i}^{\langle 2 \rangle}$ and $\mathbf{r}_{j}^{\langle 2 \rangle}$, the cosine difference can be defined:
\begin{equation}
    D_{\cos}\left(\mathbf{r}_{i}^{\langle 2 \rangle},\, \mathbf{r}_{j}^{\langle 2 \rangle}\right) = 1- \frac{\langle \mathbf{r}_{i}^{\langle 2 \rangle},\, \mathbf{r}_{j}^{\langle 2 \rangle} \rangle}
{\left\|\mathbf{r}_{i}^{\langle 2 \rangle}\right\|\, \left\|\mathbf{r}_{j}^{\langle 2 \rangle}\right\|} ,
\end{equation} 
and the change of before and after the geo-RoPE transformation is given by:
\begin{align}
\Delta D_{\cos} 
&= D_{\cos}\left(\mathbf{r}_{i}^{\langle 2 \rangle\prime},\, \mathbf{r}_{j}^{\langle 2 \rangle\prime}\right) 
   - D_{\cos}\left(\mathbf{r}_{i}^{\langle 2 \rangle},\, \mathbf{r}_{j}^{\langle 2 \rangle}\right) \nonumber \\
&= 2 \cdot \frac{\langle \mathbf{r}_{i}^{\langle 2 \rangle},\, \mathbf{r}_{j}^{\langle 2 \rangle} \rangle}
{\left\|\mathbf{r}_{i}^{\langle 2 \rangle}\right\|\, \left\|\mathbf{r}_{j}^{\langle 2 \rangle}\right\|} \cdot 
\sin^2\left( \frac{\Delta \sigma}{2} \right)
\label{eq:11},
\end{align}
where $\Delta \sigma = \sigma_i - \sigma_j$. Restricting $\sigma$ in $[-\pi/2, +\pi/2]$ ensures that $\Delta \sigma$ captures the precise rotational difference between POIs. $\Delta D_{\cos}$ is minimized when $\Delta\sigma = 0$ and maximized when $\Delta\sigma \in [-\pi, +\pi]$. This property ensures that the transformation naturally preserves similarity for spatially proximate POIs while appropriately separating distant ones. The complete proof of Eq.~\ref{eq:11} is provided in Appendix~A.

Importantly, Eq.~\ref{eq:11} decomposes the change into two interpretable factors:

\begin{equation}
\Delta D_{\cos} \propto 
    \underbrace{\sin^2\left(\frac{\Delta\sigma}{2}\right)}_{\begin{array}{c}\substack{\scriptsize\text{Positional}\\\scriptsize\text{difference factor}}\end{array}}
    \times 
    \underbrace{\frac{\langle \mathbf{r}_{i}^{\langle 2 \rangle},\, \mathbf{r}_{j}^{\langle 2 \rangle} \rangle}
{\left\|\mathbf{r}_{i}^{\langle 2 \rangle}\right\|\, \left\|\mathbf{r}_{j}^{\langle 2 \rangle}\right\|}}_{\begin{array}{c}\substack{\scriptsize\text{Semantic}\\\scriptsize\text{similarity factor}}\end{array}}
\end{equation}

This formulation demonstrates the cosine difference change of $\sigma$ is not determined by spatial separation alone, but by the \textit{product} of spatial and semantic factors. For semantically similar POIs (high cosine similarity), small $\Delta\sigma$ preserves their proximity while large $\Delta\sigma$ appropriately separates them. Conversely, for semantically dissimilar POIs, the transformation has minimal effect regardless of spatial configuration, preventing geographic signals from corrupting semantic distinctions. Analogously, this equally applies to $d$ normalized to $[0,\pi]$.

Finally, the rotated second-layer residual vector $\overrightarrow{\mathbf{r}_p^{\langle 2 \rangle}}$ can be obtained:
\begin{equation}
\overrightarrow{\mathbf{r}_p^{\langle 2 \rangle}} =
\begin{bmatrix}
    \mathcal{T}_{\alpha \cdot\sigma_p}(\mathbf{r}_p^{\langle 2 \rangle})\\
     \mathcal{T}_{\beta \cdot d_p}(\mathbf{r}_p^{\langle 2 \rangle}) 
\end{bmatrix} \in \mathbb{R}^{4M}
\end{equation}
where $\alpha, \beta \in \mathbb{R}^+$ are hyperparameters controlling the contributions of the azimuth angle $\sigma_p$ and radial distance $d_p$, respectively.

The transformed embeddings $\overrightarrow{\mathbf{r}_p^{\langle 2 \rangle}}$ are subsequently subjected to $K$-means clustering to yield the third-level identifier. The complete semantic-geographic identifier is then constructed as:
\begin{equation}
\text{SID}_{p} = (j^{\langle 1 \rangle}, j^{\langle 2 \rangle}, j^{\langle 3 \rangle}),
\end{equation}

The first two layers encode semantic structure and the third layer integrates relative geographic information via Geo-RoPE-enhanced clustering. This hierarchical SID framework ensures that semantic and spatial signals are jointly represented, supporting generative models with discrete tokens that reflect both behavioral and geographic context.

\section{Experiments}
In this section, extensive experiments are conducted on an industrial dataset to evaluate the effectiveness of our proposed Pro-GEO. Specifically, we aim to address the following questions:

\textit{\textbf{Q1:} How does Pro-GEO perform compared to existing state-of-the-art methods in the industrial dataset?}

\textit{\textbf{Q2:} What is the contribution of each component in Pro-GEO?}

\textit{\textbf{Q3:} How do different hyper-parameter settings affect the performance of Pro-GEO?}

\textit{\textbf{Q4:} How do geographical clustering in the codebook and visualization of recommendation results provide evidence for the effectiveness of Pro-GEO?}
\subsection{Experimental Setup}
The dataset, implementation details, evaluation metrics, and comparison methods are detailed in Appendix B.

\begin{table*}[h]
\centering
\caption{Performance comparison of Pro-GEO and state-of-the-art methods on quantization and recommendation metrics.
The best values are highlighted in \textbf{bold}, while the second-best values are denoted by \underline{\textit{italic underline}}. 
Positive improvements are shown in \colorbox{green}{\textcolor{black}{positive}} and negative improvements in \colorbox{pink}{\textcolor{black}{negative}}. The unit for Avg. Dist., p90 Dist., and p95 Dist. is kilometers (km). $^\dagger$ indicates that the RQ-OPQ codebook size is set to [512, 512, 512, 256, 256]. For fair comparison, the clustering distance metric is all computed based on the third-layer codebook. *** 
denotes statistically significant improvement over the second-best model RQ-Kmeans ($p < 0.001$). $\uparrow$ and $\downarrow$ indicate that higher and lower values are better for the corresponding evaluation metrics, respectively.}
\label{table:1}
\small
\begin{tabular}{l|ccccc|ccccc}
\toprule
\multirow{2}{*}{Method}  &  \multicolumn{5}{c}{Quantization metrics} & \multicolumn{4}{c}{Recommendation metrics} \\ \cmidrule(lr){2-6} \cmidrule(lr){7-11}
 & CUR($\uparrow$) & ICR($\uparrow$) & Avg. Dist.($\downarrow$) & p90 Dist.($\downarrow$) & p95 Dist.($\downarrow$) & Hit@1($\uparrow$) & Hit@50($\uparrow$) & Hit@100($\uparrow$) & NDCG@50($\uparrow$) & NDCG100($\uparrow$) \\
\midrule
RQ-VAE \cite{rajput2023recommender}   & \textbf{8.02}\%   & 55.87\%   & 64.91  & 109.88 & 243.89 &  0.0887 & 0.3677 & 0.4006 & 0.1800 & 0.1855 \\
RQ-Kmeans  \cite{deng2025onerec}  & 7.26\%   & 55.07\%   & 47.73  & 79.94  & 167.87 & \underline{\textit{0.0978}} & \underline{\textit{0.4056}} & \underline{\textit{0.4401}} & \underline{\textit{0.1989}} & \underline{\textit{0.2047}} \\
Zhou et al. \cite{zhou2025hymirec}  & \textbf{8.02}\%   & \underline{\textit{57.65\%}}   & 46.76  & \underline{\textit{75.26}}  & \underline{\textit{154.93}} & 0.0943 & 0.4031 & 0.4384 & 0.1950 & 0.2009 \\
RQ-OPQ $^\dagger$ \cite{chen2025onesearch}  & 0.00\%   & \textbf{97.70}\%   & \underline{\textit{46.71}}  & 77.46  & 161.10 &   0.0625 & 0.1085 &  0.1085  & 0.0859 & 0.0859 \\
Pro-GEO   & \textbf{7.67\%} & 57.08\% & \textbf{25.41} & \textbf{44.91} & \textbf{75.57} & \textbf{0.0998} & \textbf{0.4243$^{***}$} & \textbf{0.4572$^{***}$} & \textbf{0.2073$^{***}$} & \textbf{0.2128$^{***}$}\\
\midrule
\textit{Imp.} 
& \colorbox{pink}{\textcolor{black}{-0.35\%}} 
& \colorbox{pink}{\textcolor{black}{-40.62\%}} 
& \colorbox{green}{\textcolor{black}{-21.30}} 
& \colorbox{green}{\textcolor{black}{-30.35}} 
& \colorbox{green}{\textcolor{black}{-79.36}} 
& \colorbox{green}{\textcolor{black}{+0.20\%}} 
& \colorbox{green}{\textcolor{black}{+1.87\%}} 
& \colorbox{green}{\textcolor{black}{+1.71\%}} 
& \colorbox{green}{\textcolor{black}{+0.84\%}} 
& \colorbox{green}{\textcolor{black}{+0.81\%}}\\ 
\bottomrule
\end{tabular}
\end{table*} 

\subsection{Performance Analysis (Q1)}
Table \ref{table:1} presents a comprehensive comparison of Pro-GEO and several state-of-the-art methods across both quantization and recommendation metrics. In terms of quantization, Pro-GEO achieves a CUR of 7.67\% and an ICR of 57.08\%, comparable to the best-performing methods, while significantly outperforming others in cluster compactness. Specifically, Pro-GEO yields the lowest average cluster diameter, as well as the lowest p90 Dist. and p95 Dist., indicating highly compact and well-separated clusters. This demonstrates the model's ability to effectively leverage geographic features for fine-grained spatial grouping. On recommendation metrics, Pro-GEO consistently surpasses all methods, achieving the highest scores on Hit@1 (0.0998), Hit@50 (0.4243), Hit@100 (0.4572), NDCG@50 (0.2073), and NDCG@100 (0.2128). These improvements are particularly notable compared to RQ-Kmeans and Zhou et al., which represent strong methods in both clustering and recommendation domains. Overall, these results validate the superiority of Pro-GEO in integrating geographic information for local service recommendation, achieving significant improvements in clustering based on joint semantic and geographic features, as well as top-N recommendation performance.

\subsection{Ablation Study (Q2)}
To evaluate the contribution of each component within the Pro-GEO framework, we conduct comprehensive ablation studies, guided by the following sub-questions:

\textit{\textbf{Q2.1:} Which type of geographic information serves as the most effective reference standard for local service recommendation?}

\textit{\textbf{Q2.2:} which stage of the model architecture does the integration of geographic information yield the greatest benefit?}

\textit{\textbf{Q2.3:} How can geographic information be supplemented to maximize overall cluster effectiveness?}

\textit{\textbf{Q2.4:} Which specific types of local geographic feature provide the most significant improvement?}

\textit{\textbf{Q2.5:} Which strategy best resolves codebook conflicts in local service recommendation systems?}

\subsubsection{Effectiveness of Geographic Types} To investigate the impact of global and local geographic information, we design the following experimental settings at the third-level clustering layer: (i) \textbf{w/ global}: global geographic coordinates (latitude-longitude) as input embedding; (ii) \textbf{w/ local}: geo-centroid local geographic coordinates ($d, \sigma$) as input embedding; (iii) \textbf{w/ R-global}: applying Geo-RoPE to enhance the representation of global geographic coordinates; (iv) \textbf{w/ R-local}: applying Geo-RoPE to enhance the representation of local geographic coordinates.

The experimental results in Fig.\ref{fig:12} reveal several important patterns regarding geographic representation for clustering. First, local geographic embeddings consistently outperform global counterparts across all metrics, confirming the advantage of modeling fine-grained spatial relationships. Specifically, the CUR and ICR of local methods are markedly higher, while cluster diameters are substantially reduced, indicating more efficient and compact clustering. Second, the application of Geo-RoPE representation learning brings significant gains regardless of coordinate type, but its effect is most pronounced in the local setting. Our Geo-RoPE-based local method achieves the highest CUR (7.67\%) and ICR (57.08\%), and the compact clusters, outperforming both the vanilla local and Geo-RoPE-based global approaches. 

\subsubsection{Impact of Geographic Feature Integration Stage} We compare multiple integration strategies to determine the most effective stage for injecting Geo-RoPE-based geographic signals. (i) \textbf{w/ only second layer}: Geo-RoPE encoding is applied at the second clustering layer; (ii) \textbf{w/ only third layer}: Geo-RoPE encoding is applied at the third clustering layer; (iii) \textbf{w/ both layers}: Geo-RoPE encoding is applied at both the second and third clustering layers.

As shown in Fig. \ref{fig:13}, integrating Geo-RoPE only at the third clustering layer (our setting) yields the highest CUR (7.67\%) and ICR (57.08\%), indicating superior clustering efficiency. This configuration also achieves notably compact geographical clustering (Avg. Dist.: 25.41), outperforming single-layer and dual-layer integration in terms of spatial grouping. Applying Geo-RoPE at both layers further reduces the geographical distances within clusters, suggesting enhanced spatial compactness. However, this comes at the expense of lower CUR (6.36\%) and ICR (52.95\%), reflecting a trade-off between geographic compactness and overall clustering effectiveness. In contrast, integrating Geo-RoPE only at the second layer results in the least favorable outcomes across all metrics, underscoring the limited impact of early-stage geographic feature injection.
\begin{figure}
    \centering
    \includegraphics[width=0.95\linewidth]{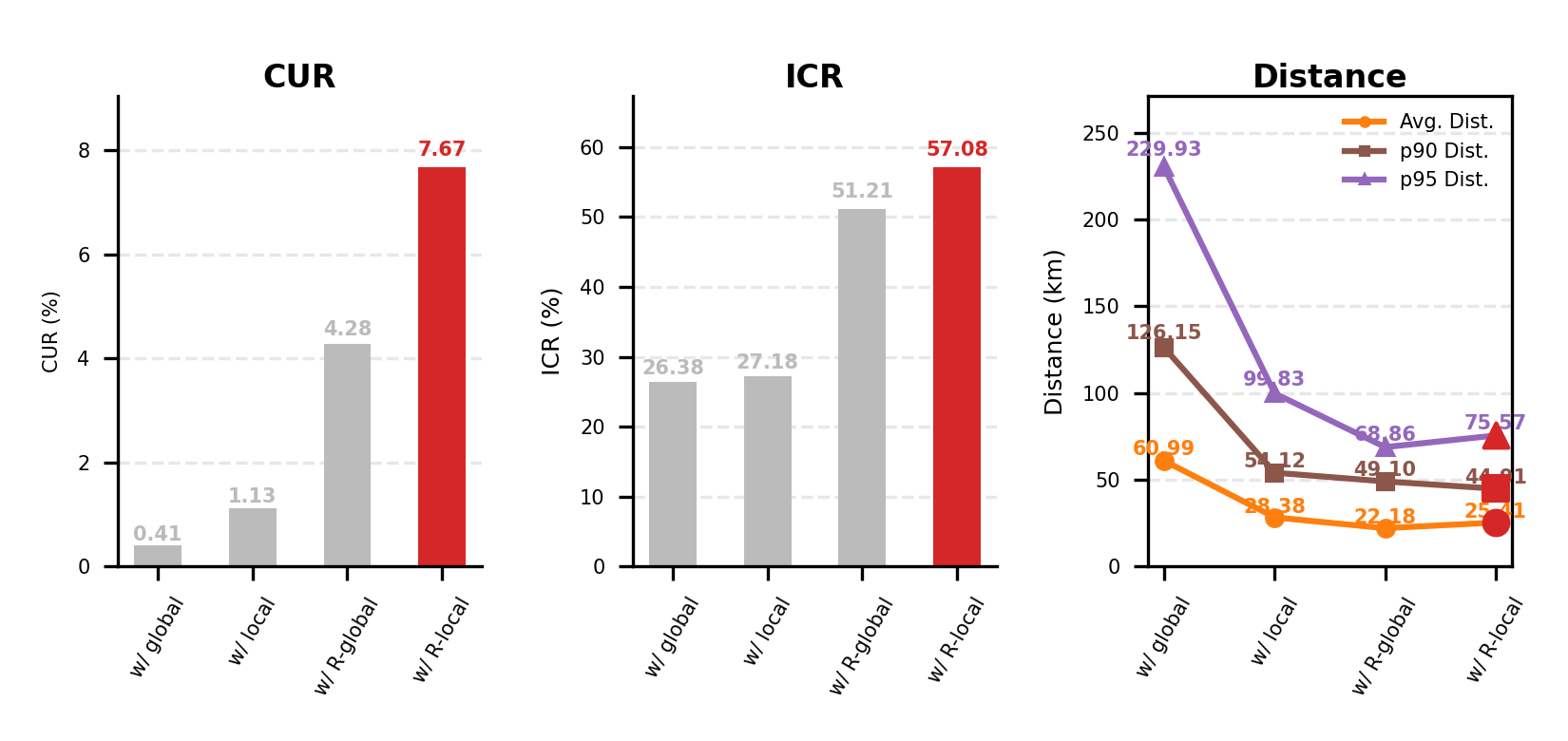}
    \vspace{-10pt}
    \caption{Comparison of global and local geographic representation strategies.}
    \vspace{-0pt}
    \label{fig:12}
\end{figure}
\begin{figure}
    \centering
    \includegraphics[width=0.95\linewidth]{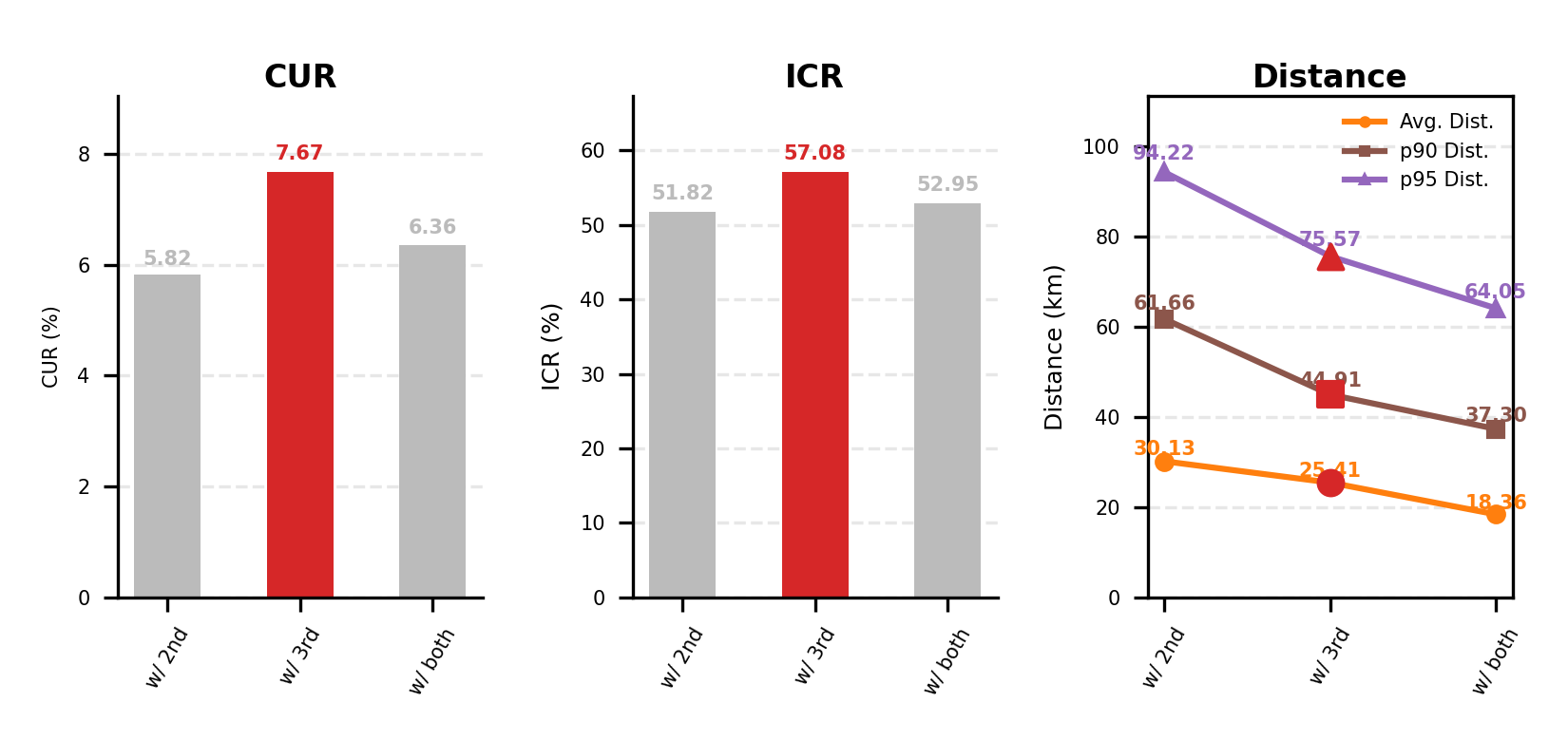}
    \vspace{-10pt}
    \caption{Comparison of Geo-RoPE integration strategies at different clustering layers. }
    \vspace{-0pt}
    \label{fig:13}
\end{figure}

\subsubsection{Enhancement Strategies for Geographic Information} To validate the superiority of Geo-RoPE, we compare several geographic information enhancement methods: (i) \textbf{w/ concat}: geographic information is appended as supplementary features, expanding the vector dimension from 128 to 130; (ii) \textbf{w/ add}: geographic information is directly added to the original 128-dimensional vector, keeping the dimension unchanged; (iii) \textbf{w/ Geo-RoPE}: Geographic information is enhanced using the Geo-RoPE encoding mechanism.
\begin{table*}[htbp]
\centering
\caption{
Comparison of different geographic attribute combinations on quantization and recommendation metrics. 
The best values are highlighted in \textbf{bold}, while the second-best values are denoted by \underline{\textit{italic underline}}. The unit for Avg. Dist., p90 Dist., and p95 Dist. is kilometers (km). $^\dagger$ indicates our proposed Pro-GEO method. $\uparrow$ and $\downarrow$ indicate that higher and lower values are better for the corresponding evaluation metrics, respectively.}
\footnotesize
\label{tab:comparison}
\begin{tabular}{l|ccccc|ccccc}
\toprule
\multirow{2}{*}{Config}  &  \multicolumn{5}{c}{Quantization metrics} & \multicolumn{5}{c}{Recommendation metrics} \\ \cmidrule(lr){2-6} \cmidrule(lr){7-11}
 & CUR($\uparrow$) & ICR($\uparrow$) & Avg. Dist.($\downarrow$) & p90 Dist.($\downarrow$) & p95 Dist.($\downarrow$) & Hit@1($\uparrow$) & Hit@50($\uparrow$) & Hit@100($\uparrow$) & NDCG@50($\uparrow$) & NDCG100($\uparrow$) \\
\midrule
w/ $\mathcal{R}(d^{+})$ & \underline{\textit{8.09\%}} & 57.90\% & 40.52 & 66.39 & 123.34 & 0.0950 & 0.3934 & 0.4257 & 0.1931 & 0.1985 \\
w/ $\mathcal{R}(d^{-})$ & \textbf{8.11\%} & \textbf{58.07\%} & 40.20 & 66.57 & 124.50 & 0.0966 & 0.3958 & 0.4277 & 0.1950 & 0.2003 \\
w/ $\mathcal{R}(d^{+},d^{-})$ & 7.94\% & 57.67\% & 38.23 & 65.21 & 120.45 & 0.0974 & 0.4057 & 0.4374 & 0.1989 & 0.2042 \\
w/ $\mathcal{R}(\sigma^{+})$ & \textbf{8.42\%} & \underline{\textit{58.50\%}} & \underline{\textit{26.42}} & \underline{\textit{43.82}} & \underline{\textit{75.22}} & 0.0865 & 0.3824 & 0.4082 & 0.1850 & 0.1893 \\
w/ $\mathcal{R}(\sigma^{-})$ & 8.35\% & \textbf{58.66\%} & 26.61 & 45.18 & 76.83 & 0.0896 & 0.3856 & 0.4115 & 0.1879 & 0.1923 \\
w/ $\mathcal{R}(\sigma^{+},\sigma^{-})$ & 7.53\% & 56.79\% & \textbf{20.14} & \textbf{37.04} & \textbf{62.96} & 0.0949 & \underline{\textit{0.4206}} & \underline{\textit{0.4494}} & \underline{\textit{0.2025}} & \underline{\textit{0.2073}} \\
w/ $\mathcal{R}(\sigma^{+},d^{+})$ & 8.02\% & 57.69\% & 27.46 & 47.37 & 81.62 & \underline{\textit{0.0982}} & 0.4146 & 0.4447 & 0.2031 & 0.2082 \\
w/ $\mathcal{R}(\sigma^{+},\sigma^{-},d^{+},d^{-})$ $^\dagger$ & 7.67\% & 57.08\% & 25.41 & 44.91 & 75.57 & \textbf{0.0998} & \textbf{0.4243} & \textbf{0.4572} & \textbf{0.2073} & \textbf{0.2128} \\
\bottomrule
\end{tabular}
\end{table*}

The comparison of geographic information enhancement strategies in Fig. \ref{fig:14} reveals clear differences in clustering performance and spatial representation. The Geo-RoPE approach demonstrates a distinct advantage over conventional methods, as evidenced by its highest CUR and ICR. This indicates that Geo-RoPE not only enables the clustering model to identify more meaningful spatial groupings, but also improves the overall coverage and efficiency of the clustering process. In contrast, the concatenation strategy, while yielding the smallest average cluster diameter (Avg. Dist.: 13.46), shows moderate clustering utilization. This suggests that simply appending geographic features increases local compactness but fails to fully exploit the spatial relationships, potentially leading to over-fragmentation and reduced cluster interpretability. The addition method, which directly merges geographic information into the existing feature space, performs worst across all metrics, highlighting its inability to effectively represent geographic signals and its tendency to dilute spatial information. The results highlight that Geo-RoPE is essential for maximizing both clustering effectiveness and spatial coherence.
\begin{figure}
    \centering
    \includegraphics[width=0.95\linewidth]{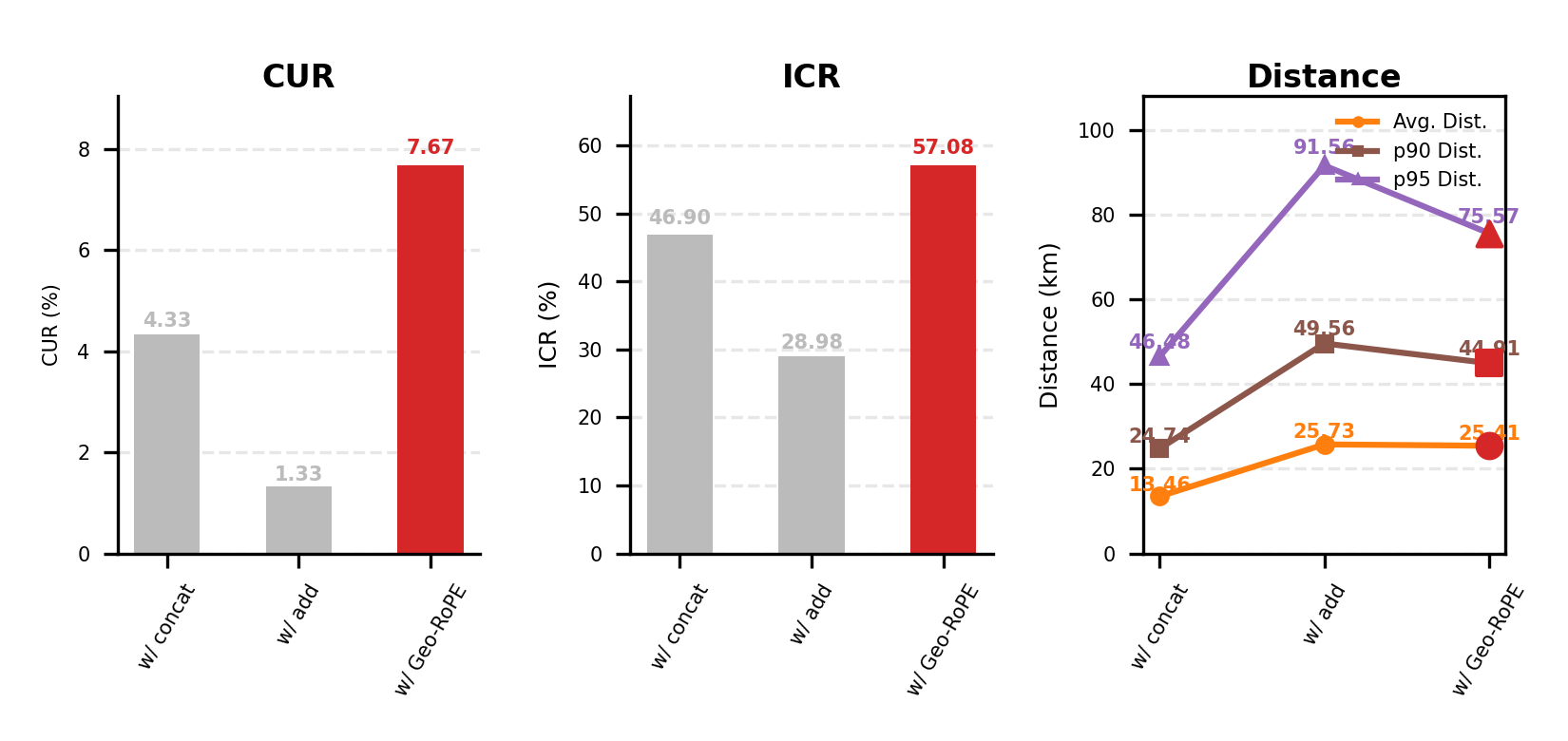}
    \vspace{-10pt}
    \caption{Comparison of geographic information enhancement strategies for clustering.}
    \vspace{-0pt}
    \label{fig:14}
\end{figure}

\begin{table}[t]
    \centering
    \caption{Comparison of different strategies for resolving codebook conflicts in local service recommendation.}
    \begin{tabularx}{0.8\linewidth}{>{\raggedright\arraybackslash}X|c|c}
    \toprule
       Config  & Hit@1 & Imp. \\ \midrule
       Random & 3.23\% & - \\
       Layer-4 hard coding  &  5.97\% & +2.74\%  \\
       2\_256\_opq & 6.38\% & +3.15\% \\ \midrule
       Closest match (Ours) & \textbf{9.98\%} & \textbf{+6.55\%} \\
       \bottomrule
    \end{tabularx}
    \label{tab:placeholder}
\end{table}

\subsubsection{Contribution of Geographic Attributes} We conduct a fine-grained analysis by incrementally adding various local geographic attributes. Specifically, we evaluate the following settings, where $\mathcal{R}$ denotes the Geo-RoPE operation: (i) \textbf{w/ $\mathcal{R}(d^{+})$}; (ii) \textbf{w/ $\mathcal{R}(d^{-})$}; (iii) \textbf{w/ $\mathcal{R}(d^{+},d^{-})$}; (iv) \textbf{w/ $\mathcal{R}(\sigma^{+})$}; (v) \textbf{w/ $\mathcal{R}(\sigma^{-})$}; (vi) \textbf{w/ $\mathcal{R}(\sigma^{+},\sigma^{-})$}; (vii) \textbf{w/ $\mathcal{R}(\sigma^{+},d^{+})$}; (viii) \textbf{w/ $\mathcal{R}(\sigma^{+},\sigma^{-},d^{+},d^{-})$}.

Table~\ref{tab:comparison} presents a comparison of different geographic attribute combinations. The results reveal several key insights: 
First, incorporating only the radial distance ($d^{+}$ or $d^{-}$) leads to moderate improvements in quantization and recommendation metrics, while geographic clustering performance remains suboptimal.
Second, adding azimuth angle ($\sigma^{+}$, $\sigma^{-}$) significantly reduces the cluster diameters, indicating compact spatial groups, but the recommendation scores do not reach their highest values.
Third, the configuration $w/ \mathcal{R}(\sigma^{+},\sigma^{-})$ further enhances spatial compactness, with Avg. Dist reduced to 20.14 and p90 Dist to 37.04. Notably, this forward and reverse rotation encoding ($\sigma^{+},\sigma^{-}$) outperforms single-directional variants in both spatial clustering and recommendation metrics, yielding higher Hit@50, Hit@100, and NDCG scores. This indicates that modeling spatial relationships from both angular directions allows the model to capture more complex and symmetric spatial patterns, thereby mitigating directional bias and improving overall performance. Finally, the configuration $w/ \mathcal{R}(\sigma^{+},\sigma^{-},d^{+},d^{-})$ achieves the best overall results, with consistently high recommendation metrics (e.g., Hit@50 = 0.4243, NDCG@50 = 0.2073) and superior geographic clustering performance. This synergy between radial distance and polar angle attributes in the relative polar coordinate system is critical for both effective spatial quantization and accurate recommendation.


\subsubsection{Resolution Strategies for Codebook Conflicts}

Table~\ref{tab:placeholder} provides a comparative evaluation of different strategies for resolving codebook conflicts in local service recommendation. The results demonstrate clear performance disparities among the tested methods. The Random baseline, which assigns POIs by randomly matching each sid, yields the lowest Hit@1 score (3.23\%), indicating limited recommendation accuracy. Layer-4 hard coding, which assigns a unique codeword to each POI at the fourth layer, improves Hit@1 to 5.97\%, suggesting that explicit codeword separation enhances retrieval precision but still suffers from insufficient semantic alignment. The 2\_256\_opq configuration, which incorporates two additional OPQ layers, further increases Hit@1 to 6.38\%, reflecting the benefit of more expressive quantization in reducing codebook ambiguities. Most notably, the Closest match strategy—our proposed approach—achieves a substantial improvement, reaching a Hit@1 score of 9.78\%. By leveraging geographic proximity for POI assignment, this method effectively resolves codebook conflicts and delivers the highest recommendation accuracy, with a relative improvement of +6.55\% over the Random baseline.

\subsection{Parameter Analysis (Q3)}
We conduct a comprehensive hyperparameter study to investigate the effect of rotation scale parameters in the Geo-RoPE process, specifically focusing on $d$ and $\sigma$ components. The hyperparameter configurations are denoted as $(\alpha, \beta)$, where $\alpha$ and $\beta$ control the scaling factors for $d$ and $\sigma$, respectively. Eight configurations are evaluated: $(0,0)$, $(0.25,0.25)$, $(0.25,0.5)$, $(0.5,0.25)$, $(0.5,0.5)$, $(0.5,1)$, $(1,0.5)$, and $(1,1)$.

Fig.~\ref{fig:best_config} presents the sensitivity analysis of the rotation scale parameters $(\alpha, \beta)$ in the Geo-RoPE process across six evaluation metrics. As $(\alpha, \beta)$ increase, all distance-based metrics  (average distance, p90 distance, and p95 distance) exhibit a pronounced decline, suggesting that larger rotation scales facilitate more compact and discriminative spatial clustering. Specifically, the average distance decreases from 45.72~km at $(0,0)$ to 25.41~km at $(0.5,0.5)$, and the p95 distance drops from 148.27~km to 75.57~km. These trends underscore the effectiveness of Geo-RoPE in capturing fine-grained spatial relationships. Concurrently, recommendation metrics also improve with increasing rotation scales. Hit@1 rises from 0.0908 at $(0,0)$ to 0.0998 at $(0.5,0.5)$, while NDCG@50 and NDCG@100 reach their highest values of 0.2073 and 0.2128, respectively, at the same configuration. Notably, the $(0.5,0.5)$ setting, highlighted in red in all plots, consistently achieves optimal or near-optimal performance across both quantization and recommendation tasks. These results collectively demonstrate that tuning rotation scale parameters in Geo-RoPE is crucial for balancing spatial compactness and recommendation accuracy, and that the configuration $(0.5,0.5)$ offers the best trade-off in our experiments.

\begin{figure}[htbp]
\centering
\includegraphics[width=\linewidth]{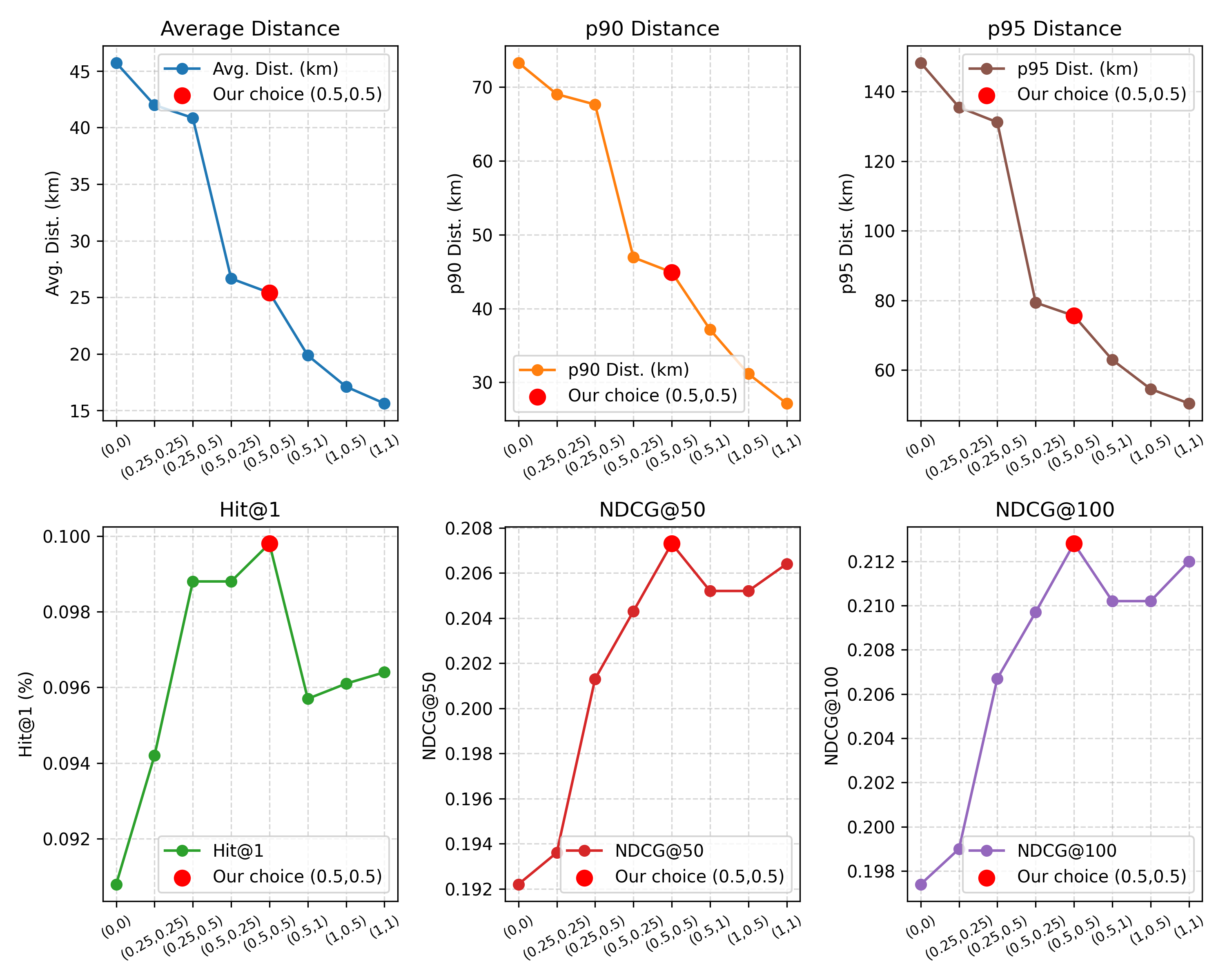}
\vspace{-10pt}
\caption{
Hyperparameter sensitivity analysis.
}
\vspace{-0pt}
\label{fig:best_config}
\end{figure}

\subsection{Case study (Q4)}
\subsubsection{Visualization of codebook geographical clustering}

Fig.~\ref{fig:case_geo_visualization} illustrates the effects of different geographic attributes on the final three-layer codebook clustering. When the radial distance $d$ is used alone, clusters tend to be spatially dispersed, resulting in distinct blocks that may overlook underlying directional relationships among POIs. Conversely, using the azimuth angle $\sigma$ alone causes clusters to aggregate POIs with similar geographic orientation, which may lead to over-grouping along certain directions. Our proposed approach, which jointly incorporates both $d$ and $\sigma$, effectively balances these two aspects. The resulting clusters are both spatially compact and directionally coherent, avoiding the drawbacks of excessive dispersion or over-aggregation. 

\begin{figure}
    \centering
    \includegraphics[width=\linewidth]{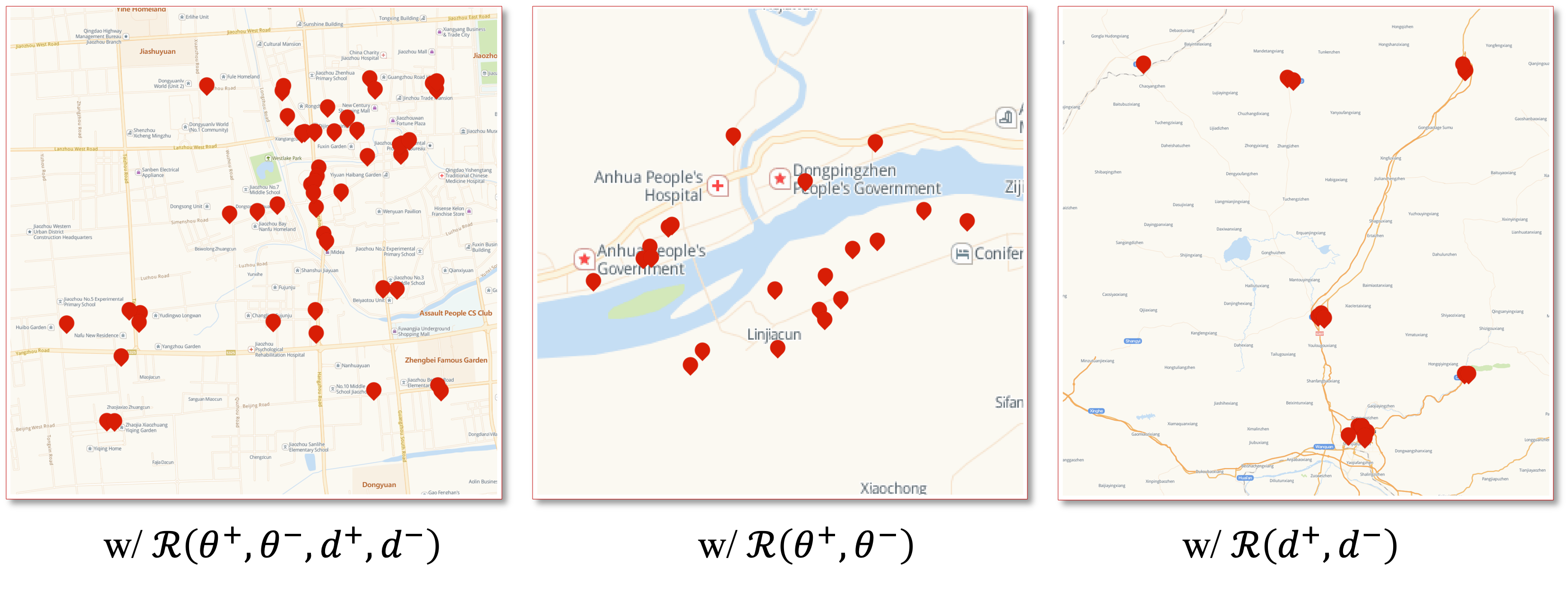}
    \caption{Visualization of codebook geographical clustering with different attribute configurations.}
    \label{fig:case_geo_visualization}
    \vspace{-0pt}
\end{figure}
\begin{figure}
    \centering
    \includegraphics[width=0.95\linewidth]{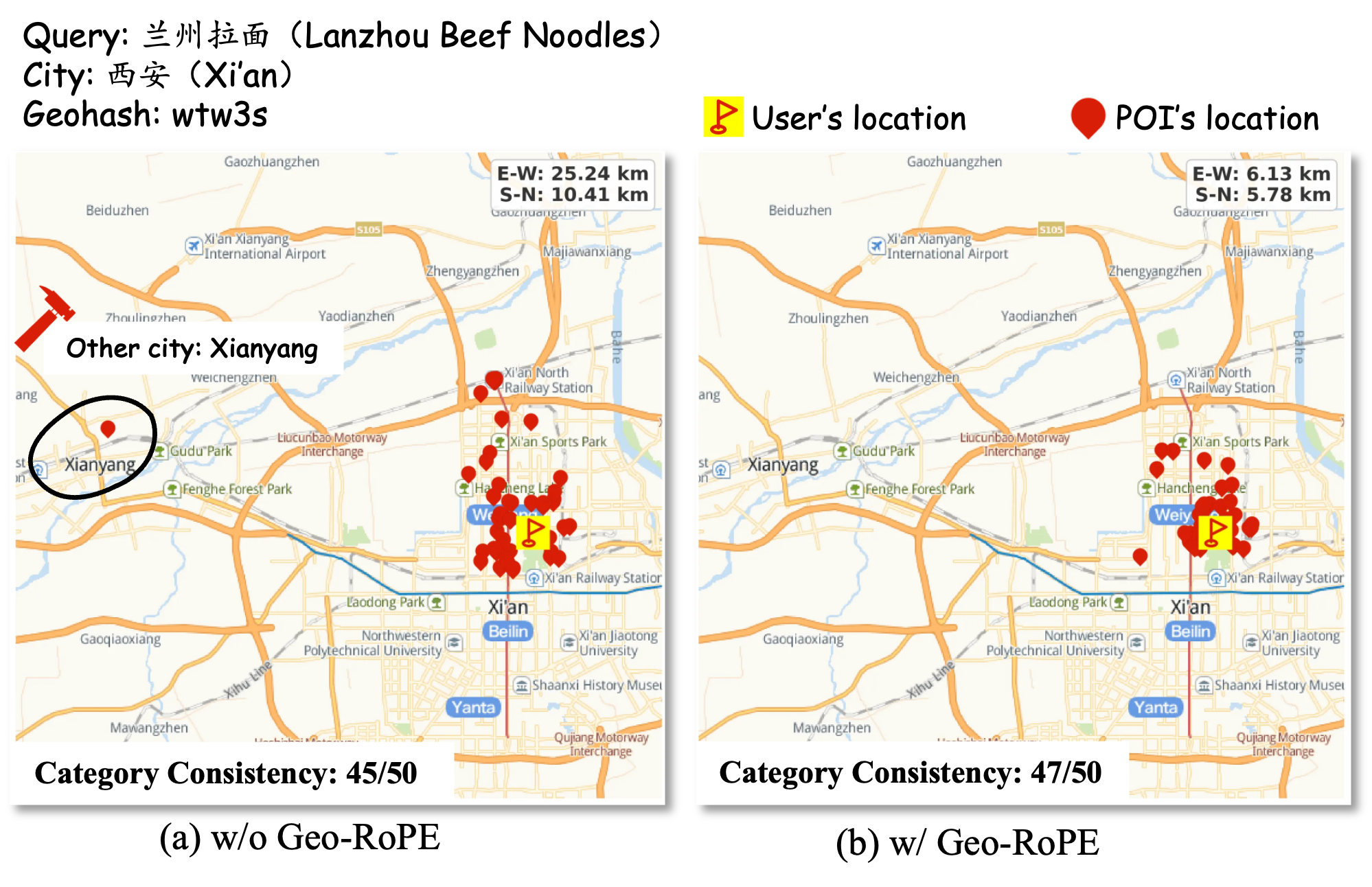}
    \vspace{-10pt}
    \caption{Visualization of recommended POI distributions with and without Geo-RoPE. E‑W indicates the maximum distance between POIs in the east–west direction; S‑N indicates the maximum distance between POIs in the south–north direction.}
    \vspace{-0pt}
    \label{fig:placeholder}
\end{figure}

\subsubsection{Visualization of recommendation results}
Fig. \ref{fig:placeholder} presents a visual comparison of the spatial distrbution of recommended POIs with and without Geo-RoPE. In the Fig. \ref{fig:placeholder}(a), the recommended POIs are spread over a much wider area, with some locations being as far as 30 kilometers from the user. Notably, recommendations in the other city (XianYang) are included, suggesting a lack of spatial precision in the model. In contrast, when Geo-RoPE is introduced, the recommended POIs are concentrated much closer to the user, typically within a 6-kilometer radius. This demonstrates the model's improved ability to capture fine-grained spatial relationships, thereby enhancing the relevance of recommendations. Furthermore, we compare category consistency under both configurations. The Geo-RoPE-enabled model achieves higher category consistency, indicating not only more spatially accurate but also more reliable recommendations in terms of matching user preferences. Additional case studies can be found in Appendix C.

\section{Conclusion}
In this work, we propose Pro-GEO, a proximity-aware geo-codebook framework for generative local service recommendation. Pro-GEO addresses the challenge of geographic misalignment in generative recommender systems, where semantic tokenization often overlooks fine-grained spatial structure. To this end, we design a geo-centroid local coordinate system to capture intra-cluster spatial relationships, and employ Geo-RoPE to embed relative geographic proximity within the codebook. This geometry-aware design enables semantic and spatial signals to be jointly modeled in a balanced manner, rather than treating geographic information as an auxiliary feature. Extensive experiments on a large-scale industrial dataset demonstrate that Pro-GEO significantly reduces geographic dispersion within codebook clusters while achieving consistent gains in the recommendation accuracy. Beyond local service recommendation, Pro-GEO highlights a general principle for structured tokenization in generative recommender systems: \textbf{auxiliary constraints can be more effectively integrated through attribute-aware transformations in the representation space} instead of simple feature concatenation. This perspective opens up opportunities for incorporating other domain-specific constraints, such as \textit{price sensitivity}, \textit{delivery time}, or \textit{inventory availability}, via analogous inductive biases. In summary, Pro-GEO bridges semantic relevance and geographic feasibility within a unified codebook framework. We believe this work provides a scalable and extensible foundation for future context-aware generative recommendation systems.

\bibliographystyle{ACM-Reference-Format}
\bibliography{sample-base}

@String{Computing = "Computing" }

@ArtifactSoftware{R,
    title = {R: A Language and Environment for Statistical Computing},
    author = {{R Core Team}},
    organization = {R Foundation for Statistical Computing},
    address = {Vienna, Austria},
    year = {2019},
    url = {https://www.R-project.org/},
}

@article{jiang2025plug,
  title={A Plug-and-Play Spatially-Constrained Representation Enhancement Framework for Local-Life Recommendation},
  author={Jiang, Hao and Wang, Guoquan and Yu, Sheng and Zeng, Yang and Zeng, Wencong and Zhou, Guorui},
  journal={arXiv preprint arXiv:2511.12947},
  year={2025}
}

@article{chen2025onesearch,
  title={Onesearch: A preliminary exploration of the unified end-to-end generative framework for e-commerce search},
  author={Chen, Ben and Guo, Xian and Wang, Siyuan and Liang, Zihan and Lv, Yue and Ma, Yufei and Xiao, Xinlong and Xue, Bowen and Zhang, Xuxin and Yang, Ying and others},
  journal={arXiv preprint arXiv:2509.03236},
  year={2025}
}

@article{chen2025unisearch,
  title={UniSearch: Rethinking Search System with a Unified Generative Architecture},
  author={Chen, Jiahui and Jiang, Xiaoze and Wang, Zhibo and Zhu, Quanzhi and Zhao, Junyao and Hu, Feng and Pan, Kang and Xie, Ao and Pei, Maohua and Qin, Zhiheng and others},
  journal={arXiv preprint arXiv:2509.06887},
  year={2025}
}

@inproceedings{hong2025eager,
  title={EAGER-LLM: Enhancing Large Language Models as Recommenders through Exogenous Behavior-Semantic Integration},
  author={Hong, Minjie and Xia, Yan and Wang, Zehan and Zhu, Jieming and Wang, Ye and Cai, Sihang and Yang, Xiaoda and Dai, Quanyu and Dong, Zhenhua and Zhang, Zhimeng and others},
  booktitle={Proceedings of the ACM on Web Conference 2025},
  pages={2754--2762},
  year={2025}
}

@inproceedings{hou2025generating,
  title={Generating long semantic ids in parallel for recommendation},
  author={Hou, Yupeng and Li, Jiacheng and Shin, Ashley and Jeon, Jinsung and Santhanam, Abhishek and Shao, Wei and Hassani, Kaveh and Yao, Ning and McAuley, Julian},
  booktitle={Proceedings of the 31st ACM SIGKDD Conference on Knowledge Discovery and Data Mining V. 2},
  pages={956--966},
  year={2025}
}

@article{van2017neural,
  title={Neural discrete representation learning},
  author={Van Den Oord, Aaron and Vinyals, Oriol and others},
  journal={Advances in neural information processing systems},
  volume={30},
  year={2017}
}

@article{rajput2023recommender,
  title={Recommender systems with generative retrieval},
  author={Rajput, Shashank and Mehta, Nikhil and Singh, Anima and Hulikal Keshavan, Raghunandan and Vu, Trung and Heldt, Lukasz and Hong, Lichan and Tay, Yi and Tran, Vinh and Samost, Jonah and others},
  journal={Advances in Neural Information Processing Systems},
  volume={36},
  pages={10299--10315},
  year={2023}
}

@inproceedings{zheng2024adapting,
  title={Adapting large language models by integrating collaborative semantics for recommendation},
  author={Zheng, Bowen and Hou, Yupeng and Lu, Hongyu and Chen, Yu and Zhao, Wayne Xin and Chen, Ming and Wen, Ji-Rong},
  booktitle={2024 IEEE 40th International Conference on Data Engineering (ICDE)},
  pages={1435--1448},
  year={2024},
  organization={IEEE}
}

@article{wei2025oneloc,
  title={Oneloc: Geo-aware generative recommender systems for local life service},
  author={Wei, Zhipeng and Cai, Kuo and She, Junda and Chen, Jie and Chen, Minghao and Zeng, Yang and Luo, Qiang and Zeng, Wencong and Tang, Ruiming and Gai, Kun and others},
  journal={arXiv preprint arXiv:2508.14646},
  year={2025}
}

@inproceedings{wang2025generative,
  title={Generative next poi recommendation with semantic id},
  author={Wang, Dongsheng and Huang, Yuxi and Gao, Shen and Wang, Yifan and Huang, Chengrui and Shang, Shuo},
  booktitle={Proceedings of the 31st ACM SIGKDD Conference on Knowledge Discovery and Data Mining V. 2},
  pages={2904--2914},
  year={2025}
}

@article{jiang2025llm,
  title={Llm-aligned geographic item tokenization for local-life recommendation},
  author={Jiang, Hao and Wang, Guoquan and Zhou, Donglin and Yu, Sheng and Zeng, Yang and Zeng, Wencong and Gai, Kun and Zhou, Guorui},
  journal={arXiv preprint arXiv:2511.14221},
  year={2025}
}

@article{su2024roformer,
  title={Roformer: Enhanced transformer with rotary position embedding},
  author={Su, Jianlin and Ahmed, Murtadha and Lu, Yu and Pan, Shengfeng and Bo, Wen and Liu, Yunfeng},
  journal={Neurocomputing},
  volume={568},
  pages={127063},
  year={2024},
  publisher={Elsevier}
}

@article{zhou2025hymirec,
  title={HyMiRec: A Hybrid Multi-interest Learning Framework for LLM-based Sequential Recommendation},
  author={Zhou, Jingyi and Chen, Cheng and Zuo, Kai and Xu, Manjie and Fu, Zhendong and Chen, Yibo and Tang, Xu and Hu, Yao},
  journal={arXiv preprint arXiv:2510.13738},
  year={2025}
}

@inproceedings{lian2014geomf,
  title={GeoMF: joint geographical modeling and matrix factorization for point-of-interest recommendation},
  author={Lian, Defu and Zhao, Cong and Xie, Xing and Sun, Guangzhong and Chen, Enhong and Rui, Yong},
  booktitle={Proceedings of the 20th ACM SIGKDD international conference on Knowledge discovery and data mining},
  pages={831--840},
  year={2014}
}

@inproceedings{lim2020stp,
  title={STP-UDGAT: Spatial-temporal-preference user dimensional graph attention network for next POI recommendation},
  author={Lim, Nicholas and Hooi, Bryan and Ng, See-Kiong and Wang, Xueou and Goh, Yong Liang and Weng, Renrong and Varadarajan, Jagannadan},
  booktitle={Proceedings of the 29th ACM international conference on information \& knowledge management},
  pages={845--854},
  year={2020}
}

@article{deng2025onerec,
  title={Onerec: Unifying retrieve and rank with generative recommender and iterative preference alignment},
  author={Deng, Jiaxin and Wang, Shiyao and Cai, Kuo and Ren, Lejian and Hu, Qigen and Ding, Weifeng and Luo, Qiang and Zhou, Guorui},
  journal={arXiv preprint arXiv:2502.18965},
  year={2025}
}

@inproceedings{wang2025progressive,
  title={Progressive Semantic Residual Quantization for Multimodal-Joint Interest Modeling in Music Recommendation},
  author={Wang, Shijia and Ouyang, Tianpei and Xiao, Qiang and Wang, Dongjing and Ren, Yintao and Xu, Songpei and Guo, Da and Luo, Chuanjiang},
  booktitle={Proceedings of the 34th ACM International Conference on Information and Knowledge Management},
  pages={6119--6127},
  year={2025}
}

@inproceedings{geng2022recommendation,
  title={Recommendation as language processing (rlp): A unified pretrain, personalized prompt \& predict paradigm (p5)},
  author={Geng, Shijie and Liu, Shuchang and Fu, Zuohui and Ge, Yingqiang and Zhang, Yongfeng},
  booktitle={Proceedings of the 16th ACM conference on recommender systems},
  pages={299--315},
  year={2022}
}

@article{cui2022m6,
  title={M6-rec: Generative pretrained language models are open-ended recommender systems},
  author={Cui, Zeyu and Ma, Jianxin and Zhou, Chang and Zhou, Jingren and Yang, Hongxia},
  journal={arXiv preprint arXiv:2205.08084},
  year={2022}
}

@article{zhang2025gpr,
  title={GPR: Towards a Generative Pre-trained One-Model Paradigm for Large-Scale Advertising Recommendation},
  author={Zhang, Jun and Li, Yi and Liu, Yue and Wang, Changping and Wang, Yuan and Xiong, Yuling and Liu, Xun and Wu, Haiyang and Li, Qian and Zhang, Enming and others},
  journal={arXiv preprint arXiv:2511.10138},
  year={2025}
}

@article{zhou2025onerec,
  title={Onerec-v2 technical report},
  author={Zhou, Guorui and Hu, Hengrui and Cheng, Hongtao and Wang, Huanjie and Deng, Jiaxin and Zhang, Jinghao and Cai, Kuo and Ren, Lejian and Ren, Lu and Yu, Liao and others},
  journal={arXiv preprint arXiv:2508.20900},
  year={2025}
}

@article{liu2025onerec,
  title={Onerec-think: In-text reasoning for generative recommendation},
  author={Liu, Zhanyu and Wang, Shiyao and Wang, Xingmei and Zhang, Rongzhou and Deng, Jiaxin and Bao, Honghui and Zhang, Jinghao and Li, Wuchao and Zheng, Pengfei and Wu, Xiangyu and others},
  journal={arXiv preprint arXiv:2510.11639},
  year={2025}
}

@article{yang2025sparse,
  title={Sparse meets dense: Unified generative recommendations with cascaded sparse-dense representations},
  author={Yang, Yuhao and Ji, Zhi and Li, Zhaopeng and Li, Yi and Mo, Zhonglin and Ding, Yue and Chen, Kai and Zhang, Zijian and Li, Jie and Li, Shuanglong and others},
  journal={arXiv preprint arXiv:2503.02453},
  year={2025}
}

@article{wang2025gflowgr,
  title={GFlowGR: Fine-tuning Generative Recommendation Frameworks with Generative Flow Networks},
  author={Wang, Yejing and Zhou, Shengyu and Lu, Jinyu and Liu, Qidong and Li, Xinhang and Zhang, Wenlin and Li, Feng and Wang, Pengjie and Xu, Jian and Zheng, Bo and others},
  journal={arXiv preprint arXiv:2506.16114},
  year={2025}
}

@article{zhang2025slow,
  title={Slow Thinking for Sequential Recommendation},
  author={Zhang, Junjie and Zhang, Beichen and Sun, Wenqi and Lu, Hongyu and Zhao, Wayne Xin and Chen, Yu and Wen, Ji-Rong},
  journal={arXiv preprint arXiv:2504.09627},
  year={2025}
}

@article{han2026feature,
  title={Feature-Indexed Federated Recommendation with Residual-Quantized Codebooks},
  author={Han, Mingzhe and Liu, Jiahao and Li, Dongsheng and Gu, Hansu and Zhang, Peng and Gu, Ning and Lu, Tun},
  journal={arXiv preprint arXiv:2601.18570},
  year={2026}
}

@article{zhang2025qwen3,
  title={Qwen3 Embedding: Advancing Text Embedding and Reranking Through Foundation Models},
  author={Zhang, Yanzhao and Li, Mingxin and Long, Dingkun and Zhang, Xin and Lin, Huan and Yang, Baosong and Xie, Pengjun and Yang, An and Liu, Dayiheng and Lin, Junyang and others},
  journal={arXiv preprint arXiv:2506.05176},
  year={2025}
}

@article{yang2025qwen2,
  title={Qwen2. 5-1m technical report},
  author={Yang, An and Yu, Bowen and Li, Chengyuan and Liu, Dayiheng and Huang, Fei and Huang, Haoyan and Jiang, Jiandong and Tu, Jianhong and Zhang, Jianwei and Zhou, Jingren and others},
  journal={arXiv preprint arXiv:2501.15383},
  year={2025}
}

@inproceedings{wang2025fim,
  title={FIM: Frequency-aware multi-view interest modeling for local-life service recommendation},
  author={Wang, Guoquan and Luo, Qiang and Hu, Weisong and Yao, Pengfei and Zeng, Wencong and Zhou, Guorui and Gai, Kun},
  booktitle={Proceedings of the 48th International ACM SIGIR Conference on Research and Development in Information Retrieval},
  pages={1748--1757},
  year={2025}
}

@article{hu2025dynamic,
  title={Dynamic Forgetting and Spatio-Temporal Periodic Interest Modeling for Local-Life Service Recommendation},
  author={Hu, Zhaoyu and Wang, Jianyang and Guo, Hao and Tian, Yuan and Xue, Erpeng and Qi, Xianyang and Lin, Hongxiang and Wang, Lei and Chen, Sheng},
  journal={arXiv preprint arXiv:2508.02451},
  year={2025}
}

@article{li2025survey,
  title={A survey of generative recommendation from a tri-decoupled perspective: Tokenization, architecture, and optimization},
  author={Li, Xiaopeng and Chen, Bo and She, Junda and Cao, Shiteng and Wang, You and Jia, Qinlin and He, Haiying and Zhou, Zheli and Liu, Zhao and Liu, Ji and others},
  year={2025},
  publisher={Preprints}
}

@article{yang2014modeling,
  title={Modeling user activity preference by leveraging user spatial temporal characteristics in LBSNs},
  author={Yang, Dingqi and Zhang, Daqing and Zheng, Vincent W and Yu, Zhiyong},
  journal={IEEE Transactions on Systems, Man, and Cybernetics: Systems},
  volume={45},
  number={1},
  pages={129--142},
  year={2014},
  publisher={IEEE}
}

\appendix

\section{Proof of Equation~\eqref{eq:11}}

\subsection{Notations and Preliminaries}

For ease of presentation, we introduce the following definitions.

\textbf{Definition 1 (Mirror-Duplicated Vector).}  
A vector $\mathbf{v} \in \mathbb{R}^{2d}$ is called \emph{mirror-duplicated} if there exists $\mathbf{v}_1 \in \mathbb{R}^d$ such that
\begin{equation}
    \mathbf{v} = \begin{bmatrix} \mathbf{v}_1 \\ \mathbf{v}_1 \end{bmatrix}.
\end{equation}


\textbf{Definition 2 (Geo-RoPE Transformation).}  Let $d = 2m$. Denote the standard $2 \times 2$ rotation matrix by
\begin{equation}
    R(\theta) = \begin{bmatrix} \cos\theta & -\sin\theta \\ \sin\theta & \cos\theta \end{bmatrix}.
\end{equation}
Define the block-diagonal rotation matrix as
\begin{equation}
    \mathcal{R}(\theta) = \mathrm{diag}(R(\theta), \ldots, R(\theta)) \in \mathbb{R}^{d \times d},
\end{equation}
which contains $m$ diagonal blocks.

For vector $\mathbf{v}_1$, the \textbf{\emph{Geo-RoPE transformation}} $\mathcal{T}_\theta: \mathbb{R}^{d} \to \mathbb{R}^{2d}$ is defined as
\begin{equation}
    \mathcal{T}_\theta(\mathbf{v_1}) = \begin{bmatrix} \mathcal{R}(\theta) \mathbf{v}_1 \\ \mathcal{R}(-\theta) \mathbf{v}_1 \end{bmatrix}.
\end{equation}

\textbf{Definition 3 (Cosine Distance).}  
For nonzero vectors $\mathbf{u}, \mathbf{v} \in \mathbb{R}^n $, the cosine distance is defined as
\begin{equation}
    D_{\cos}(\mathbf{u}, \mathbf{v}) = 1 - \frac{\langle \mathbf{u}, \mathbf{v} \rangle}{\|\mathbf{u}\| \|\mathbf{v}\|}.
\end{equation}

\subsection{Key Lemma}

\begin{lemma}\label{lemma:inner-product}
Let $\mathbf{r}_1 = \begin{bmatrix} \overline{\mathbf{r}}_1 \\ \overline{\mathbf{r}}_1 \end{bmatrix}$ and $\mathbf{r}_2 = \begin{bmatrix} \overline{\mathbf{r}}_2 \\ \overline{\mathbf{r}}_2 \end{bmatrix}$ be two mirror-duplicated vectors. After applying the Geo-RoPE transformations $\mathbf{r}_1' = \mathcal{T}_{\theta_1}(\overline{\mathbf{r}}_1)$ and $\mathbf{r}_2' = \mathcal{T}_{\theta_2}(\overline{\mathbf{r}}_2)$, the inner product satisfies
\begin{equation}
    \langle \mathbf{r}_1', \mathbf{r}_2' \rangle = \cos(\Delta\theta) \cdot \langle \mathbf{r}_1, \mathbf{r}_2 \rangle,
\end{equation}
where $\Delta\theta = \theta_1 - \theta_2$.
\end{lemma}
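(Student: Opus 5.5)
The plan is a direct computation: expand the inner product of the two transformed vectors block by block, using only two facts about the planar rotations in Definition 2. First, $R(\alpha)^\top = R(-\alpha)$ and $R(\alpha)R(\beta) = R(\alpha+\beta)$. Both facts hold blockwise, so the same identities hold for $\mathcal{R}$. Second, $R(\alpha) + R(-\alpha) = 2\cos\alpha\, I_2$, because the off-diagonal sine entries cancel. This second fact likewise lifts to $\mathcal{R}(\alpha) + \mathcal{R}(-\alpha) = 2\cos\alpha\, I_d$.

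The steps, in order, are as follows. By the definition of $\mathcal{T}_\theta$,
\begin{equation*}
\langle \mathbf{r}_1', \mathbf{r}_2' \rangle = \overline{\mathbf{r}}_1^\top \mathcal{R}(\theta_1)^\top \mathcal{R}(\theta_2)\, \overline{\mathbf{r}}_2 + \overline{\mathbf{r}}_1^\top \mathcal{R}(-\theta_1)^\top \mathcal{R}(-\theta_2)\, \overline{\mathbf{r}}_2 .
\end{equation*}
Using the transpose and composition rules, the first term becomes $\overline{\mathbf{r}}_1^\top \mathcal{R}(-\Delta\theta)\, \overline{\mathbf{r}}_2$ and the second becomes $\overline{\mathbf{r}}_1^\top \mathcal{R}(\Delta\theta)\, \overline{\mathbf{r}}_2$, where $\Delta\theta = \theta_1 - \theta_2$. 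Adding the two terms and applying the cancellation identity gives
\begin{equation*}
\langle \mathbf{r}_1', \mathbf{r}_2' \rangle = 2\cos(\Delta\theta)\, \langle \overline{\mathbf{r}}_1, \overline{\mathbf{r}}_2 \rangle .
\end{equation*}

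Finally, since the vectors are mirror-duplicated (Definition 1), $\langle \mathbf{r}_1, \mathbf{r}_2 \rangle = 2\langle \overline{\mathbf{r}}_1, \overline{\mathbf{r}}_2 \rangle$. Substituting this yields exactly $\cos(\Delta\theta)\,\langle \mathbf{r}_1, \mathbf{r}_2 \rangle$.

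I do not expect any real obstacle here. The only point needing care is the sign bookkeeping: the forward rotation yields $-\Delta\theta$ and the reverse rotation yields $+\Delta\theta$. It is precisely this pairing of forward and reverse rotations that removes the antisymmetric sine term, which a single RoPE rotation would leave behind. It is also worth noting that all $m$ blocks share the same angle. This is what makes the cancellation reduce to a scalar multiple of the identity rather than a blockwise-varying matrix.
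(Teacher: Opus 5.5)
Your proof is correct and is essentially the paper's argument. The paper expands each $2\times 2$ block into a dot term $d_i$ and a cross term $c_i$, getting $d_i\cos\Delta\theta \pm c_i\sin\Delta\theta$ for the forward and reverse parts; this is your identity $\mathcal{R}(-\Delta\theta)+\mathcal{R}(\Delta\theta)=2\cos(\Delta\theta)\, I_d$ written in coordinates, and the closing step via $\langle \mathbf{r}_1,\mathbf{r}_2\rangle = 2\langle \overline{\mathbf{r}}_1,\overline{\mathbf{r}}_2\rangle$ is identical.
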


\begin{proof}
Decompose $\overline{\mathbf{r}}_1$ and $\overline{\mathbf{r}}_2$ into $m$ pairs of 2-dimensional sub-vectors:
\begin{equation}
    \overline{\mathbf{r}}_1 = \begin{bmatrix} 
    \overline{\mathbf{r}}_1^{(1)} \\ 
    \vdots \\ 
    \overline{\mathbf{r}}_1^{(m)} 
    \end{bmatrix}, \quad
    \overline{\mathbf{r}}_2 = \begin{bmatrix} 
    \overline{\mathbf{r}}_2^{(1)} \\ 
    \vdots \\ 
    \overline{\mathbf{r}}_2^{(m)} 
    \end{bmatrix},
\end{equation}
where
\begin{equation}
    \overline{\mathbf{r}}_j^{(i)} = \begin{bmatrix} 
    a_j^{(i)} \\ 
    b_j^{(i)} 
    \end{bmatrix}, \quad
    j \in \{1, 2\}, \; i = 1, \ldots, m, \quad
    m = \frac{\dim(\overline{\mathbf{r}}_1)}{2} = \frac{\dim(\overline{\mathbf{r}}_2)}{2}.
\end{equation}

For each pair of vectors $\overline{\mathbf{r}}_1^{(i)}$ and $\overline{\mathbf{r}}_2^{(i)}$, define:
\begin{align}
    d_i &= \langle \overline{\mathbf{r}}_1^{(i)}, \overline{\mathbf{r}}_2^{(i)} \rangle = a_1^{(i)} a_2^{(i)} + b_2^{(i)} b_2^{(i)}, \label{eq:dot-component}\\
    c_i &= a_1^{(i)} b_2^{(i)} - a_2^{(i)}b_1^{(i)}. \label{eq:cross-component}
\end{align}

By the mirror-duplication property, the original inner product is
\begin{equation}
    \langle \mathbf{r}_1, \mathbf{r}_2 \rangle = 2\langle \overline{\mathbf{r}}_1, \overline{\mathbf{r}}_2 \rangle = 2\sum_{i=1}^{m} d_i.
\end{equation}

Using the property of rotation matrices $\langle R(\alpha)\mathbf{u}, R(\beta)\mathbf{v} \rangle = \mathbf{u}^T R(\beta-\alpha) \mathbf{v}$ (which follows from $R(\alpha)^T = R(-\alpha)$), we have:

\textbf{Forward rotation part} (applying $R(\theta_1)$ and $R(\theta_2)$ to the $i$-th pair):
\begin{equation}
    \langle R(\theta_1)\overline{\mathbf{r}}_1^{(i)}, R(\theta_2)\overline{\mathbf{r}}_2^{(i)} \rangle = (\overline{\mathbf{r}}_1^{(i)})^T R(\theta_2-\theta_1) \overline{\mathbf{r}}_2^{(i)} = (\overline{\mathbf{r}}_1^{(i)})^T R(-\Delta\theta) \overline{\mathbf{r}}_2^{(i)}.
\end{equation}

Expanding $R(-\Delta\theta) = \begin{bmatrix} \cos\Delta\theta & \sin\Delta\theta \\ -\sin\Delta\theta & \cos\Delta\theta \end{bmatrix}$, we compute:
\begin{equation}
    R(-\Delta\theta) \overline{\mathbf{r}}_2^{(i)} = \begin{bmatrix} a_2^{(i)}\cos\Delta\theta + b_2^{(i)}\sin\Delta\theta \\ -a_2^{(i)}\sin\Delta\theta + b_2^{(i)}\cos\Delta\theta \end{bmatrix}.
\end{equation}

Taking the inner product with $\overline{\mathbf{r}}_1^{(i)}$:
\begin{align}
    &(\overline{\mathbf{r}}_1^{(i)})^T R(-\Delta\theta) \overline{\mathbf{r}}_2^{(i)} \notag \\
    &= a_1^{(i)}[a_2^{(i)}\cos\Delta\theta + b_2^{(i)}\sin\Delta\theta] \notag + b_1^{(i)}[-a_2^{(i)}\sin\Delta\theta + b_2^{(i)}\cos\Delta\theta] \notag \\
    &= (a_1^{(i)}a_2^{(i)} + b_1^{(i)}b_2^{(i)})\cos\Delta\theta + (a_1^{(i)}b_2^{(i)} - a_2^{(i)}b_1^{(i)})\sin\Delta\theta \notag \\
    &= d_i\cos\Delta\theta + c_i\sin\Delta\theta.
\end{align}

\textbf{Reverse rotation part} (applying $R(-\theta_1)$ and $R(-\theta_2)$ to the $i$-th pair):  

By a completely analogous calculation, replacing $R(-\Delta\theta)$ with $R(\Delta\theta)$, we obtain:
\begin{equation}
    \langle R(-\theta_1)\overline{\mathbf{r}}_1^{(i)}, R(-\theta_2)\overline{\mathbf{r}}_2^{(i)} \rangle = d_i\cos\Delta\theta - c_i\sin\Delta\theta.
\end{equation}

Now summing over all $m$ pairs. Since $\mathbf{r}_1' = \begin{bmatrix} \mathcal{R}(\theta_1)\overline{\mathbf{r}}_1 \\ \mathcal{R}(-\theta_1)\overline{\mathbf{r}}_1 \end{bmatrix}$ and $\mathbf{r}_2' = \begin{bmatrix} \mathcal{R}(\theta_2)\overline{\mathbf{r}}_2 \\ \mathcal{R}(-\theta_2)\overline{\mathbf{r}}_2 \end{bmatrix}$, the total inner product is:
\begin{align}
\label{eq:29}
    \langle \mathbf{r}_1', \mathbf{r}_2' \rangle 
    &= \sum_{i=1}^{m} \langle R(\theta_1)\overline{\mathbf{r}}_1^{(i)}, R(\theta_2)\overline{\mathbf{r}}_2^{(i)} \rangle + \sum_{i=1}^{m} \langle R(-\theta_1)\overline{\mathbf{r}}_1^{(i)}, R(-\theta_2)\overline{\mathbf{r}}_2^{(i)} \rangle \notag \\
    &= \sum_{i=1}^{m} [d_i\cos\Delta\theta + c_i\sin\Delta\theta] + \sum_{i=1}^{m} [d_i\cos\Delta\theta - c_i\sin\Delta\theta] \notag \\
    &= 2\cos\Delta\theta \sum_{i=1}^{m} d_i + \sin\Delta\theta \underbrace{\left[\sum_{i=1}^{m} c_i - \sum_{i=1}^{m} c_i\right]}_{=0} \notag \\
    &= 2\cos\Delta\theta \sum_{i=1}^{m} d_i = \cos\Delta\theta \cdot \langle \mathbf{r}_1, \mathbf{r}_2 \rangle.
\end{align}

As demonstrated by the Eq.\ref{eq:29}, the symmetric forward-reverse rotation structure allows the transformed vector distance to be determined exclusively by their difference of the relative rotational angle, independent of their absolute rotation angles. This structure preserves the original semantic relationships between vectors while incorporating additional information, such as the geographic feature utilized in this work.  In Table \ref{tab:comparison}, experimental results demonstrate that the codebook designed with the forward-reverse rotation structure exhibits significant advantages across all prediction metrics.
\end{proof}
\subsection{Proof of Equation~\eqref{eq:11}}

Since $\mathcal{T}_\theta$ is orthogonal, it preserves vector norms:
\begin{equation}
    \|\mathbf{r}_i'\| = \|\mathbf{r}_i\|, \quad \|\mathbf{r}_j'\| = \|\mathbf{r}_j\|.
\end{equation}


According to \textbf{Lemma A.1}, the cosine similarity after transformation is
\begin{equation}
    \cos\alpha' = \frac{\langle \mathbf{r}_1', \mathbf{r}_2' \rangle}{\|\mathbf{r}_1'\| \|\mathbf{r}_2'\|} = \frac{\cos\Delta\theta \cdot \langle \mathbf{r_1}, \mathbf{r_2} \rangle}{\|\mathbf{r_1}\| \|\mathbf{r_2}\|} = \cos\Delta\theta \cdot \cos\alpha,
    \label{eq:cos-after}
\end{equation}
where $\cos\alpha = \frac{\langle \mathbf{r}_i, \mathbf{r}_j \rangle}{\|\mathbf{r}_i\|\,\|\mathbf{r}_j\|}$.

Therefore, the change in cosine distance is
\begin{align}
\Delta D_{\cos} 
&= D_{\cos}(\mathbf{r}_i', \mathbf{r}_j') - D_{\cos}(\mathbf{r}_i, \mathbf{r}_j) \notag \\
&= [1 - \cos\alpha'] - [1 - \cos\alpha] \notag \\
&= \cos\alpha - \cos\alpha' \notag \\
&= \cos\alpha[1 - \cos(\Delta\theta)].
\label{eq:dc-change}
\end{align}
Using the trigonometric identity $1 - \cos\Delta\theta = 2\sin^2(\Delta\theta/2)$, we obtain
\begin{equation}
\Delta D_{\cos} = 2\cos\alpha \cdot \sin^2\left(\frac{\Delta\theta}{2}\right) = 2 \cdot \frac{\langle \mathbf{r}_i, \mathbf{r}_j \rangle}{\|\mathbf{r}_i\|\,\|\mathbf{r}_j\|} \cdot \sin^2\left(\frac{\Delta\theta}{2}\right).
\label{eq:final}
\end{equation}

Thus, 
\begin{center}
\boxed{
\Delta D_{\cos} = 2 \cdot \frac{\langle \mathbf{r}_i, \mathbf{r}_j \rangle}{\|\mathbf{r}_i\|\,\|\mathbf{r}_j\|} \cdot \sin^2\left(\frac{\Delta\theta}{2}\right)
}
\end{center}

\subsection{Conclusion}

The above proof rigorously establishes the change in cosine distance under the Geo-RoPE transformation. The key insight is:

\begin{center}
\boxed{
\begin{aligned}
    &\text{Cosine distance change between two vectors:} \\[8pt]
    &\qquad \Delta D_{\cos} \propto 
        \underbrace{\sin^2\left(\frac{\Delta\theta}{2}\right)}_{\substack{\text{Positional} \\ \text{difference factor}}}
        \times 
        \underbrace{\cos\alpha}_{\substack{\text{Semantic} \\ \text{similarity factor}}}
\end{aligned}
}
\end{center}

This formula demonstrates that the geo-RoPE transformation effectively 
increases the cosine distance between vectors that are geographically 
distant, while preserving the semantic similarity between vectors 
that are semantically close.

\section{Experimental Setup}
\subsection{Dataset}
Our experiments are conducted on real-world industrial data from a leading local-life service platform in China. The dataset is sourced from this platform’s core local-life application scenario, reflecting authentic user interactions and purchase behaviors. Specifically, we collected users' transaction records within two months for model training, and reserved additional daily data for validation. Each data contains the complete user search query, user geographic location, merchant exposure sequence, and actual user interaction behavior. Moreover, we extended our experiments to multiple public POI recommendation datasets with geographic information, namely Foursquare-NYC \cite{yang2014modeling}, Foursquare-TKY \cite{yang2014modeling}, and Yelp. Foursquare-NYC and Foursquare-TKY include both merchant information and user behavioral sequences, while Yelp contains only merchant information. All three datasets were used to evaluate the clustering effect of Pro-GEO, while recommendation effectiveness was compared on NYC and TKY. Following GNPR-SID \cite{wang2025generative}, we preprocess each dataset by removing POIs with fewer than 10 interactions and users with fewer than 10 check-ins.

\subsection{Implementation Details}
\subsubsection{Industrial dataset}
For item tokenization training, we configure codebooks as $[512, 512, 512]$ for the industrial dataset. RoPE-based Rotary encoding is incorporated at the third quantization layer, with hyperparameters set to $\alpha=0.5$ and $\beta=0.5$. Text embeddings are extracted using the pre-trained Qwen3-Embedding-8B model \cite{zhang2025qwen3},  yielding a 128-dimensional representation for each item. For generative recommendation training, the Qwen 2.5-1.5B \cite{yang2025qwen2} is trained, with a maximum sequence length of 512 and a batch size of 64 to accommodate complex user-item interactions. We employ the AdamW optimizer with a learning rate of 8e-5 and a weight decay of 0.01. During inference, we select the 10 POIs whose semantic IDs are closest to the user's location as the candidate set. Beam search is performed with a width of 20 to generate the final list of recommended POIs. All experiments were conducted on 8 GPU with 80GB of memory. The prompt template is shown in Table~\ref{tab:prompt_design1}.

\begin{table}[htbp]
\centering
\caption{Prompt template for generative recommendation on the industrial dataset.}
\label{tab:prompt_design1}
\begin{tabularx}{\linewidth}{>{\raggedright\arraybackslash}X}
\toprule
\textbf{Instruction:} \\
Here is a local life service expert. Your task is to predict the list of POIs that the user might purchase based on their intentions. \\
\midrule
\textbf{Input:}   \\
Query: \textcolor{orange}{<query>}, City: \textcolor{green}{<city>} GeoHash: \textcolor{purple}{<geohash>} \\
\midrule
\textbf{Output:} \\
SID list: \textcolor{gray}{<Sid\_1>, <Sid\_2>, ..., <Sid\_N>} \\
\bottomrule
\end{tabularx}
\end{table}

\subsubsection{Pubilc dataset}

Following GNPR-SID \cite{wang2025generative}, we filtered NYC and TKY datasets. NYC has 1,083 users and 5,135 POIs, TKY has 2,293 users and 7,873 POIs, and Yelp contains 150,346 POIs. Codebook sizes are set to [32,32,32] for NYC/TKY and [64,64,64] for Yelp. All POI embeddings are obtained using pre-trained qwen3-embedding. For generative recommendation on NYC and TKY, we train Qwen2.5-1.5B \cite{yang2025qwen2} using AdamW (lr=1e-5, weight decay=0.01, batch size=32). Since user's geographic information is unavailable in these datasets, we follow TIGER~\cite{rajput2023recommender} and introduce Layer-4 hard coding to avoid codebook conflicts. All experiments were conducted on 8 GPU with 80GB of memory. The prompt template is shown in Table~\ref{tab:prompt_design2}. For brevity, we defer the experimental results on public datasets 
to Appendix~\ref{appendix:D}.
\begin{table}[htbp]
\centering
\caption{Prompt template for generative recommendation on the public dataset.}
\label{tab:prompt_design2}
\begin{tabularx}{\linewidth}{>{\raggedright\arraybackslash}X}
\toprule
\textbf{Instruction:} \\
Here is a record of a user's POI accesses, your task is based on the history to predict the POI that the user is likely to access at the specified time. \\
\midrule
\textbf{Input:}   \\
The user \textcolor{orange}{<uid>} visited: \textcolor{green}{<Sid\_1>} at \textcolor{blue}{<time\_1>}, ..., visited: \textcolor{green}{<Sid\_N>} at \textcolor{blue}{<time\_N>}. When \textcolor{blue}{<time>} user \textcolor{orange}{<uid>} is likely to visit:\\
\midrule
\textbf{Output:} \textcolor{gray}{<Sid>} \\
\bottomrule
\end{tabularx}
\end{table}

\subsection{Evaluation Metrics}
\begin{figure*}[h]
    \centering
    \includegraphics[width=\textwidth]{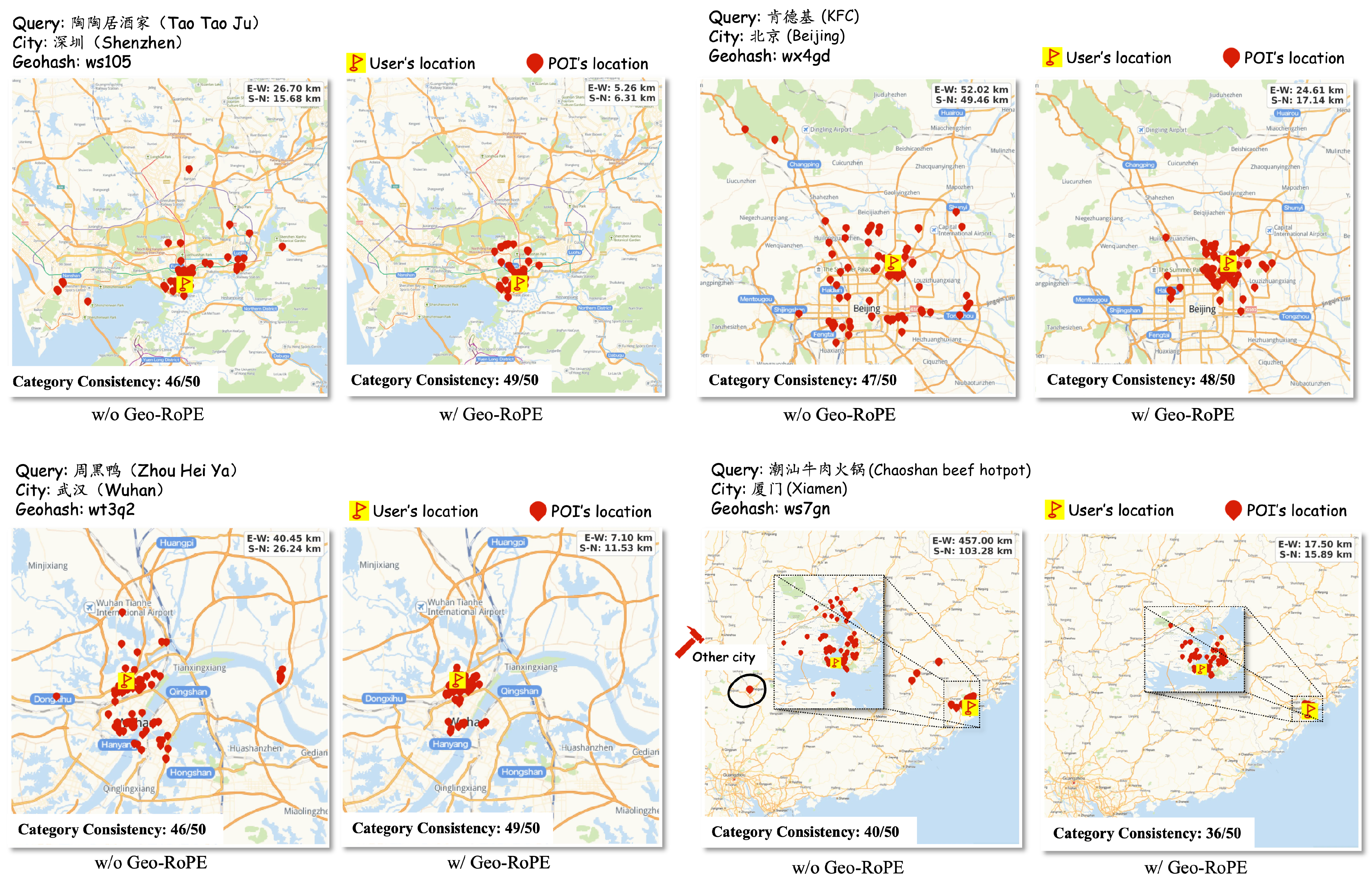}
    \caption{Additional case studies on the spatial distribution of POI recommendation results. E‑W indicates the maximum distance between POIs in the east–west direction; S‑N indicates the maximum distance between POIs in the south–north direction.}
    \label{fig:124245}
\end{figure*}
\begin{table*}[h]
\centering
\caption{Quantization and recommendation metrics comparison on public datasets.
Positive improvements are shown in \colorbox{green}{\textcolor{black}{positive}} 
and negative improvements in \colorbox{pink}{\textcolor{black}{negative}}. The unit for Avg. Dist., p90 Dist., and p95 Dist. is kilometers (km). $\uparrow$ and $\downarrow$ indicate that higher and lower values are better for the corresponding evaluation metrics, respectively.}
\label{tab:public_results}
\resizebox{\textwidth}{!}{
\begin{tabular}{lccccccccc}
\toprule
\multirow{2}{*}{\textbf{Metric}} 
& \multicolumn{3}{c}{\textbf{NYC}} 
& \multicolumn{3}{c}{\textbf{TKY}} 
& \multicolumn{3}{c}{\textbf{Yelp}} \\
\cmidrule(lr){2-4} \cmidrule(lr){5-7} \cmidrule(lr){8-10}
& RQ-Kmeans & Pro-GEO & \textit{Imp.} 
& RQ-Kmeans & Pro-GEO & \textit{Imp.}
& RQ-Kmeans & Pro-GEO & \textit{Imp.} \\
\midrule

CUR ($\uparrow$)      
& 4.51\%  & 5.25\%  & \gain{+0.74\%}    
& 5.21\%  & 5.59\%  & \gain{+0.38\%}    
& 28.93\% & 28.69\% & \drop{$-$0.24\%}  \\

ICR ($\uparrow$)      
& 52.13\% & 55.29\% & \gain{+3.16\%}    
& 49.12\% & 45.60\% & \drop{$-$3.52\%}  
& 60.95\% & 58.98\% & \drop{$-$1.97\%}  \\

Avg. Dist. ($\downarrow$) 
& 6.96    & 6.10    & \gain{$-$0.86}   
& 7.21    & 6.19    & \gain{$-$1.02}   
& 115.67  & 75.35   & \gain{$-$40.32}  \\

p90 Dist. ($\downarrow$)  
& 15.35   & 13.15   & \gain{$-$2.20}   
& 14.10   & 12.11   & \gain{$-$1.99}   
& 137.64  & 27.10   & \gain{$-$110.54} \\

p95 Dist. ($\downarrow$)  
& 18.60   & 16.62   & \gain{$-$1.98}   
& 16.65   & 14.75   & \gain{$-$1.90}   
& 614.16  & 322.43  & \gain{$-$291.73} \\

\midrule

Hit@1 ($\uparrow$)  
& 0.2308  & 0.3587  & \gain{+0.1279} 
& 0.1911  & 0.2833  & \gain{+0.0922} 
& -       & -       & -              \\

Hit@5 ($\uparrow$)  
& 0.4123  & 0.5300  & \gain{+0.1177} 
& 0.3000  & 0.4927  & \gain{+0.1927} 
& -       & -       & -              \\

Hit@10 ($\uparrow$) 
& 0.4591  & 0.5763  & \gain{+0.1172} 
& 0.3396  & 0.5583  & \gain{+0.2187} 
& -       & -       & -              \\

\bottomrule
\end{tabular}
}
\end{table*}
We employ the quantization metrics to evaluate the performance of the codebook and the recommendation metrics to evaluate the performance of the generated semantic IDs.

\textbf{Quantization metrics:} We adopt two widely recognized metrics: independent coding rate (ICR) and codebook utilization rate (CUR). ICR measures the degree to which codewords are assigned uniquely across items. CUR focuses on the distribution of codeword assignments. Beyond these standard metrics, local service platforms impose unique spatial constraints due to delivery distance requirements. Therefore, we further analyze the geographic distribution of POIs assigned to the same SID. We introduce the Average Distance from the Center point (Avg. Dist\_q), which computes the mean Euclidean distance from each POI's position to the centroid of its quantification SID, providing a measure of intra-group spatial compactness. Additionally, we report p90. Dist\_q and p95. Dist\_q, representing the maximum distance to the SID centroid within the top 90\% and 95\% of POIs, respectively. These metrics capture the spatial spread and outlier sensitivity of each SID cluster.

\textbf{Recommendation Metrics:} We employ Hit@N and NDCG@N with$ N \in {50, 100}$ to evaluate the performance of generative recommendation. Additionally, we report Hit@1 to compare the top-1 recommendation results across different methods. 

\subsection{Comparison Methods}

To evaluate the effectiveness of our proposed codebook Pro-GEO, we compare it with several state-of-the-art tokenization methods within the Qwen2.5-1.5B-based generative recommendation model. (1) RQ-VAE \cite{rajput2023recommender}, which uses a residual quantized variational autoencoder to learn hierarchical discrete representations for POIs; (2) RQ-Kmeans \cite{deng2025onerec}, which applies residual quantization with k-means clustering to compress high-dimensional item embeddings into compact token sequences; (3) Zhou et al. \cite{zhou2025hymirec}, which modifies RQ-Kmeans by replacing the Euclidean distance metric with cosine similarity and using projected residuals instead of direct subtraction; (4) RQ-OPQ \cite{chen2025onesearch}, which combines residual quantization with optimized product quantization to improve token efficiency and retrieval accuracy in large-scale recommendation systems.

This comparative perspective allows us to systematically analyze how different codebook design choices impact the performance of local service recommendation.



\section{More case studies for reasoning results}

Figure \ref{fig:124245} offers additional case studies on the spatial distribution and category consistency of POI recommendation results, comparing settings with and without Geo-RoPE across different cities. Across all cases, the results consistently demonstrate the advantages brought by Geo-RoPE. Without Geo-RoPE (left column in each group), the recommended POIs are often dispersed over a broader geographic region, occasionally spanning distances of 20–36 km and even extending into neighboring cities. This large spatial spread can negatively affect both the practicality and relevance of recommendation results, as some suggested POIs are situated far away from the user’s actual location. In contrast, in the presence of Geo-RoPE (right column in each group), recommended POIs are markedly concentrated within a much smaller area, generally within 6–10 km of the user. The reduced spatial span directly demonstrates the model’s improved ability to understand and utilize fine-grained geographic contexts in recommendation generation. Furthermore, the category consistency metric shows that Geo-RoPE consistently leads to more reliable recommendations that better match the user’s interests. This highlights that Geo-RoPE not only enhances spatial relevance but also maintains or improves semantic reliability in diverse scenarios. Overall, these visualizations provide strong empirical evidence for the effectiveness of Geo-RoPE in generating POI recommendations that are both geographically and semantically consistent.

\section{Public dataset results}
\label{appendix:D}
Table~\ref{tab:public_results} presents the quantization and recommendation 
results on three public datasets. We analyze the results from two perspectives.

\textbf{Quantization Metrics.}
Pro-GEO consistently outperforms RQ-KMeans across all geographic clustering 
metrics. On Yelp, the geographic cluster distance drops substantially 
(Avg. Dist. $\downarrow$40.32km; p90 Dist. $\downarrow$110.54km; 
p95 Dist. $\downarrow$291.73km), demonstrating Pro-GEO's strong capability 
to enhance geo-awareness at large scale. On NYC, Pro-GEO also achieves 
notable gains in both CUR (+0.74\%) and ICR (+3.16\%). 
Although marginal declines in ICR are observed on TKY and Yelp, 
the codebook remains well-utilized without collapse, indicating that 
Pro-GEO effectively improves geographic consistency while preserving 
semantic discriminability.

\textbf{Recommendation Metrics.}
On NYC and TKY, integrating Pro-GEO into Qwen2.5-1.5B yields consistent 
improvements across all metrics. Specifically, Hit@1 improves by 12.79\% 
on NYC and 9.22\% on TKY, with substantial gains in Hit@5 and Hit@10 
as well. These results demonstrate that the improved geographic consistency 
in POI tokenization directly translates to better downstream recommendation 
performance.

In summary, Pro-GEO exhibits consistently strong performance in both 
geographic clustering and recommendation across datasets of varying scales 
and domains, fully demonstrating its effectiveness and generalizability.








\end{document}